\newtheorem{theorem}{Theorem}[section]
\newtheorem{remark}[theorem]{Remark}
\newtheorem{assumption}[theorem]{Assumption}
\newtheorem{lemma}[theorem]{Lemma}
\newtheorem{proposition}[theorem]{Proposition}
\newcommand{\va}{\bm{a}}
\newcommand{\vx}{\bm{x}}
\newcommand{\vy}{\bm{y}}
\newcommand{\vz}{\bm{z}}
\newcommand{\ve}{\bm{e}}
\newcommand{\vb}{\bm{b}}
\newcommand{\vv}{\bm{v}}
\newcommand{\vu}{\bm{u}}
\newcommand{\vw}{\bm{w}}
\newcommand{\vf}{\bm{f}}
\newcommand{\vd}{\bm{d}}
\newcommand{\vp}{\bm{p}}
\newcommand{\vg}{\bm{g}}
\newcommand{\sA}{\mathcal{A}}
\newcommand{\sB}{\mathcal{B}}
\newcommand{\sW}{\mathcal{W}}
\newcommand{\mM}{\mathsf{M}}
\newcommand{\mD}{\mathsf{D}}
\newcommand{\mC}{\mathsf{C}}
\newcommand{\mI}{\mathsf{I}}
\newcommand{\mF}{\mathsf{F}}
\newcommand{\mR}{\mathsf{R}}
\newcommand{\tr}{\mathsf{T}}
\newcommand{\KM}{\Gamma_{\text{KM}}}
\newcommand{\real}{\mathbb{R}}
\newcommand{\complex}{\mathbb{C}}
\newcommand{\zed}{\mathbb{Z}}
\newcommand{\bigO}{\mathcal{O}}
\newcommand{\CE}{\Psi}
\newcommand{\C}{\psi}
\renewcommand{\Re}{\text{Re}}
\renewcommand{\Im}{\text{Im}}
\newcommand{\ii}{\imath}
\DeclareMathOperator\cov{Cov}
\DeclareMathOperator\var{Var}
\DeclareMathOperator\sinc{sinc}
\DeclareMathOperator\nullspace{null}
\DeclareMathOperator\linspan{span}
\DeclareMathOperator\rank{rank}
\renewcommand{\hat}{\widehat}
\title{Imaging with power controlled source pairs}
\author{Patrick Bardsley}
\email{bardsley@math.utah.edu}
\address{Mathematics Department, University of Utah, 155 S 1400 E RM 233, Salt Lake City
UT 84112-0090.}
\author{Fernando Guevara~Vasquez}
\email{fguevara@math.utah.edu}
\address{Mathematics Department, University of Utah, 155 S 1400 E RM 233, Salt Lake City
UT 84112-0090.}
\begin{document}

\maketitle

\begin{abstract}
Scatterers in a homogeneous medium are imaged by probing the medium with
two point sources of waves modulated by correlated signals and by
measuring only intensities at one single receiver. For appropriately
chosen source pairs, we show that full waveform array measurements can
be recovered from such intensity measurements by solving a linear least
squares problem. The least squares solution can be used to image with
Kirchhoff migration, even if the solution is determined only up to a known
one-dimensional nullspace. The same imaging strategy can be used
when the medium is probed with point sources driven by correlated
Gaussian processes and autocorrelations are measured at a single
location. Since autocorrelations are robust to noise, this can be used
for imaging when the probing wave is drowned in background noise.

\smallskip
\noindent \textbf{Keywords.} Intensity-only imaging, travel-time migration, noise
sources, autocorrelation.

\smallskip
\noindent \textbf{AMS Subject Classifications.} 78A45, 78A46, 35R30

\end{abstract}


\section{Introduction}\label{sec:intro}

Scatterers in a homogeneous medium can be imaged by probing the medium
with a wave emanating from a point source, and recording the reflected
waves at one or more receivers. An image of the scatterers can be
generated by repeating this experiment while varying the position of the
source and/or receiver and using classic methods such as the Kirchhoff (travel
time) migration (see e.g. \cite{Bleistein:2001:MMI}) or MUSIC (see e.g.
\cite{Cheney:2001:LSM}). We are concerned here with the case where only
{\em intensity measurements} can be made at the receiver; destroying
phase information that migration and MUSIC need to image. Intensity
measurements occur e.g. when the response time of the receiver is larger
than the typical wave period or when it is more cost effective to
measure intensities than the full waveform.  This is typical in e.g.
optical coherence tomography \cite{Schmitt:1997:OCM,Schmitt:1999:OCT} and radar
imaging \cite{Cheney:2009:FRI}. Another situation is when the wave
sources are stochastic and the measurements consist of correlations of
the signal recorded at different points
\cite{Schuster:2004:ISI,Garnier:2009:PSI}. In the special case of
autocorrelations (i.e. correlating the signal with itself), the
Wiener-Khinchin theorem guarantees we are measuring power spectra (see
e.g. \cite{Ishimaru:1997:WPS}), another form of intensity measurements. 

The setup we analyze consists of an array of sources and one single
receiver that can only record power spectra, i.e. the intensity of the
signal at certain frequency samples. A crucial assumption for our method
is that we can use {\em source pairs}, meaning we can send {\em
correlated signals} from {\em two different locations}. Thus we allow
for known delays or attenuations between the signals in a source pair.
In acoustics, one way of achieving this would be to drive two
transducers in an array with the same signal. With light, one could use
an incoherent plane wave with wavefronts parallel to a configurable
mask. The mask lets light through one or two small holes, whose
locations can be controlled. 

Our method can be used for imaging from both measurements of intensities
(\S\ref{sec:intensities}) and autocorrelations (\S\ref{sec:correlation}).

\subsection{Intensity only measurements}
\label{sec:intensities}
One way to deal with intensity measurements is {\em phase
retrieval}, i.e. first recovering the phases from intensity
measurements, and using this reconstructed field to image. In
diffraction tomography, intensity measurements at two different planes
can be used to recover phases
\cite{Gerchberg:1972:GSA,Teague:1983:DPR,Gbur:2004:ICS}. If additional
information is known (e.g. the support
of the scatterer), intensities at one single plane can be used
\cite{Fienup:1987:RCV,Maleki:1992:TRO,Maleki:1993:PRI}.  Total or partial knowledge
of the incident field can also be exploited to image from intensities at one single
plane \cite{Crocco:2007:FNL}.

Chai et al. \cite{Chai:2011:AII} take a compressed sensing approach to image a few
point scatterers exactly. With knowledge of the incident field, the location
of the scatterers can be resolved in both range and cross-range with
monochromatic measurements. The same ideas can even be used to 
deal with multiple scattering \cite{Chai:2014:ISL}. Novikov et al.
\cite{Novikov:2014:ISI} use the polarization identity $4\Re(\vu^*\vv)
= \|\vu+\vv\| ^2 - \|\vu-\vv\|^2$,
$\vu,\vv \in \complex^N$, and linear combinations of single source
experiments to recover dot products of two single source experiments
from intensity data. MUSIC can then be used to image with this quadratic
functional of the full waveform data.

Here we do phase retrieval assuming knowledge of the intensity of
the incident field. Our illumination strategy using source pairs does
not require direct manipulation of phases or addition/subtraction of
wave fields. We reduce the recovery of the total field to a linear
system with a one-dimensional nullspace which we can write explicitly in
terms of the incident field. There is one (very sparse) linear system per frequency
sample to solve, and the linear system has size comparable to twice the
number of source positions. Intuitively we are recovering a field in
$\complex^N$ from $2N$ (or more) real measurements. We show that
vectors in the one-dimensional nullspace do not affect Kirchhoff
migration. Hence we can use, without modification, Kirchhoff migration and
its standard range and cross-range resolution estimates
(see e.g. \cite{Bleistein:2001:MMI}).

\subsection{Correlation based methods}
\label{sec:correlation}

In seismic imaging, correlations of traces (or recordings) at many
receivers have been used to image the earth's subsurface, especially
when the wave sources and their locations are not well known
\cite{Schuster:1996:RLC,Schuster:2004:ISI,Schuster:2009:SI}. The idea is
that correlations of the signals at two different locations contain
information about the Green's function between the two locations, and
this information can be exploited to image the medium and any
scatterers. This principle can even be exploited to do opportunistic
imaging with ambient noise
\cite{Garnier:2009:PSI,Garnier:2010:RAI,Garnier:2015:SNR}.
Cross-correlations can also be used to image scatterers in a random
medium 
\cite{Borcea:2005:IAI,Garnier:2012:CBV,Garnier:2014:RSV}. In radar imaging, the
measurements are in fact correlations \cite{Cheney:2009:FRI}, and so
even stochastic processes can be used instead of deterministic signals
\cite{Tarchi:2010:SARNR,Vela:2012:NRT}.

The method we present here can also be used to image scatterers using
autocorrelations. We show it is possible to form an image by exploiting angular diversity in
source pairs instead of cross-correlations among different receivers.
Just as in the intensity measurements case, we are able to recover (up to a one
dimensional nullspace) full waveform array measurements. One advantage of using
autocorrelations instead of cross-correlations is that the data acquisition at
the (single) receiver is simpler.  The drawback is that our illumination
strategy requires to illuminate with pairs of sources, but also with each of
the sources in a pair on its own. To get the same full waveform data as an array with $N$
sources, we need at least $3N$ different experiments. Another advantage of
using autocorrelations is that the measurements are extremely robust to noise.
As an example, our numerical experiments show that it is possible to image
scatterers with an array that is sending noise from all possible source
positions; the only assumption about the noise being that all the
sources are independent stochastic processes except for two correlated
sources whose positions we can control. Because of this robustness, it
may be possible to use our imaging method in situations where the medium
is to be probed in a non-intrusive way, i.e. active imaging with waves
that are of the same magnitude as the ambient noise.

\subsection{Contents}

The particular physical setup we consider is described in
\S\ref{sec:arrayimaging}. The illumination strategy with source pairs is explained
in \S\ref{sec:arvrecovery}, which leads to a phase retrieval problem
that can be formulated as a linear system (\S\ref{sec:invertible}).  The
least squares solution to the linear system is then used as data for
imaging with Kirchhoff migration, and we show that this gives
essentially the same images as full waveform data
(\S\ref{sec:migration}). The extension to stochastic source pairs is
given in \S\ref{sec:stochillum}. Then we show that our method is robust
to additive noise when using autocorrelations (\S\ref{sec:snr}).
Numerical experiments illustrating our method are given in
\S\ref{sec:numerics} and we conclude with a discussion in
\S\ref{sec:discussion}.

\section{Array imaging for full waveform measurements}\label{sec:arrayimaging}

Here we introduce the experimental setup we consider (\S\ref{sec:exp})
and briefly recall the classic Kirchhoff migration imaging method
(\S\ref{sec:km}).

\subsection{Experimental setup}\label{sec:exp}
The physical setup is illustrated in figure~\ref{fig:arraysetup}. We
probe a homogeneous medium with waves originating from $N$ point sources
with locations $\vec{\vx}_s \in \sA$, $s = 1,2,\ldots,N$. For simplicity
we consider a linear array in 2D or a square array in 3D, i.e. $\sA =
[-a/2,a/2]^{d-1} \times \{0\}$, where $d=2$ or $3$ is the dimension.
Our imaging strategy imposes only mild restrictions on the source
positions, so other array shapes may be considered. Waves are recorded
at a {\em single} known receiver location $\vec{\vx}_r$. 

The total field generated by the array (or incident field) can be
written as
\begin{equation}
 \hat{u}_{\text{inc}}(\vec{\vx},\omega) = \vg_0(\vec{\vx},\omega)^\tr \vf(\omega),
\end{equation}
where
\begin{equation}\label{eq:homoggreen}
\vg_0(\vec{\vx},\omega) = \begin{bmatrix}
\hat{G}_0(\vec{\vx},\vec{\vx}_1,\omega),
\hat{G}_0(\vec{\vx},\vec{\vx}_2,\omega),
\ldots,
\hat{G}_0(\vec{\vx},\vec{\vx}_N,\omega)
\end{bmatrix}^\tr \in \complex^N,
\end{equation}
and the source driving signals are $\vf(\omega) =
[\hat{f}_1(\omega), \hat{f}_2(\omega), \ldots,
\hat{f}_N(\omega)]^\tr$. Since we assume waves propagate through a
homogeneous medium, we used the outgoing free space Green function,
\begin{equation}
 \hat{G}_0(\vec{\vx},\vec{\vy},\omega) = 
 \begin{cases}
  \frac{\ii}{4} H_0^{(1)}(k|\vec{\vx} - \vec{\vy}|),  & \text{for
  $d=2$},\\
  \displaystyle\frac{\exp[\ii k|\vec{\vx} - \vec{\vy}|]}{4\pi |\vec{\vx} -
 \vec{\vy}|}, & \text{for $d=3$.}
 \end{cases}
 \label{eq:green}
\end{equation}
Here $H_0^{(1)}$ is the zeroth order Hankel function of the first kind,
$k = \omega/c_0$ is the wavenumber, $\omega$ is the angular frequency
and $c_0$ is a known constant background wave speed. For functions of
time, the Fourier transform convention we use is
\begin{equation}\label{eq:FT}
\hat{f}(\omega) = \int f(t)e^{\ii\omega t} dt,~\text{and}~
f(t) = \frac{1}{2\pi}\int \hat{f}(\omega)e^{-\ii\omega
t}d\omega,~\text{where $f\in L^2(\real)$.}
\end{equation}

The scatterers we want to image are represented by a compactly supported
reflectivity function $\rho(\vec{\vx})$. Under the weak scattering
assumption (i.e. $\rho \ll 1$), we can use the Born
approximation to the total field at the receiver
\begin{equation}\label{eq:fullmeas}
\hat{u}(\vec{\vx}_r,\omega) =
(\vg_0+\vp)^\tr\vf,
\end{equation}
where the array response vector is
\begin{equation}\label{eq:arv}
\vp(\vec{\vx},\omega) = k^2\int d\vec{\vy}
\rho(\vec{\vy})\hat{G}_0(\vec{\vx},\vec{\vy},\omega)\vg_0(\vec{\vy},\omega).
\end{equation}

\begin{figure}
\centering
\includegraphics[scale=0.7]{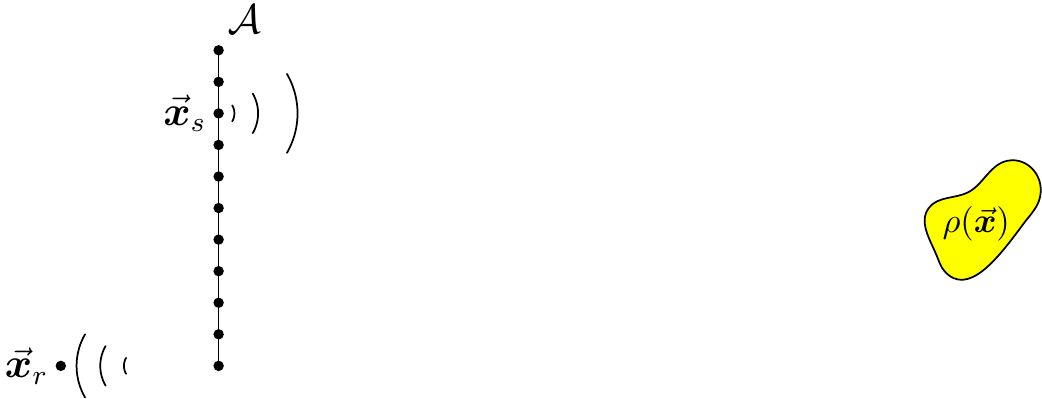}
\caption{Physical setup for array imaging with an array $\sA$ of sources
$\vec{\vx}_s$ and a single receiver $\vec{\vx}_r$. The scatterer is
represented by a compactly supported reflectivity function $\rho(\vec{\vx})$.}
\label{fig:arraysetup}
\end{figure}

\subsection{Kirchhoff migration}\label{sec:km}

By e.g. using illuminations $\vf(\omega) = \ve_i$, $i=1,\ldots,N$ corresponding to the
canonical basis vectors, it is possible to obtain the
array response vector $\vp(\vec{\vx}_r,\omega)$ from the
measurements \eqref{eq:fullmeas}. The scatterers can then be imaged using
the Kirchhoff migration functional (see e.g. 
\cite{Bleistein:2001:MMI}) which for a single frequency $\omega$ is
\begin{equation}\label{eq:kirchoff}
\KM[\vp,\omega](\vec{\vy}) = 
\overline{\hat{G}}_0(\vec{\vy},\vec{\vx}_r,\omega)\vg_0(\vec{\vy},\omega)^*\vp(\vec{\vx}_r,\omega),
\end{equation}
where $\vec{\vy}$ represents a point in the image. This image has a
Rayleigh or cross-range (i.e. in the direction parallel
to the array) resolution of $\lambda L /
a$, where $L$ is the distance from the array to the scatterer
(see e.g. \cite{Bleistein:2001:MMI}). To get range (i.e. in the direction
perpendicular to the array) resolution we need to integrate
$\KM[\vp,\omega](\vec{\vy})$ for frequencies $\omega$ in some
frequency band $\sB =
[-\omega_{max},-\omega_{min}] \cup [\omega_{min},\omega_{max}]$, the
same frequency band of the signals $\vf(\omega)$.  The range
resolution is then $c_0/(\omega_{max} - \omega_{min})$ (see e.g.
\cite{Bleistein:2001:MMI}). We discuss this imaging functional further
in section~\ref{sec:migration}.

\section{Intensity only measurements}
\label{sec:arvrecovery} 
We start in \S\ref{sec:detillum} by describing a source pair
illumination strategy for intensity measurements of the total field
$|\hat{u}(\vec{\vx}_r,\omega)|^2$.  With this strategy, the problem of
recovering the array response vector $\vp$ can be formulated as a
linear system (\S\ref{sec:prsys}).  
\subsection{Illumination strategy}\label{sec:detillum}
The data we use comes from probing the medium with $N_p$ source pairs
that are sending signals with known power and phase difference. Since
the number of distinct source pairs out of an array with $N$ sources is
$N(N-1)/2$ we must have $N_p \leq N(N-1)/2$. We assume the power and
phase differences remain the same for all $N_p$ illuminations.  The case
where these quantities depend on the source pair is left for future
studies. To be more precise, the illumination corresponding to the
$m-$th source pair $(i(m),j(m)) \in \{1,\ldots,N\}^2$ is
\begin{equation}
\vf_m(\omega) = 
\mF_m \begin{bmatrix} \alpha(\omega) \\ \beta(\omega) \end{bmatrix},
~\text{where}~\mF_m = [\ve_{i(m)}, \ve_{j(m)}] \in \real^{N\times 2}.
\label{eq:Fm}
\end{equation}
We emphasize that only $|\alpha|^2$, $|\beta|^2$
and the phase difference $\phi(\omega) \equiv \arg(\overline{\alpha}
\beta)$ is assumed to be known for the signals $\alpha$ and $\beta$. A
particular case is when the same signal is sent from the source pair,
i.e. $\beta = \alpha$ and $\phi(\omega) = 0$.

The intensity of the field $u_m$ arising from the source pair
illumination $\vf_m$ is
\begin{equation}\label{eq:det_inhomog_meas}
|\hat{u}_m(\vec{\vx}_r,\omega)|^2 =
\overline{\vg^\tr \vf_m} \vf_m^\tr \vg = \vg^* \mF_m \begin{bmatrix}
|\alpha|^2 & \overline{\alpha}\beta \\ \overline{\beta} \alpha &
|\beta|^2 \end{bmatrix} \mF_m^\tr \vg,
\end{equation}
where we used $\vg = \vg_0  + \vp$. Note that since
$\overline{\alpha} \beta = |\alpha| |\beta| e^{\ii\phi}$, the inner $2\times 2$
Hermitian matrix is uniquely determined by the magnitudes of $\alpha$
and $\beta$ and their phase difference $\phi$. By using the single
source reference illumination $\ve_{i}$ we additionally measure
\begin{equation}\label{eq:det_homog_meas}
|\hat{u}_{i}^0(\vec{\vx}_r,\omega)|^2 =
\vg^*\ve_{i}\ve_{i}^\tr\vg, \qquad\text{ for $i=1,\ldots,N.$}
\end{equation} 
The data we exploit to recover $\vp$ is obtained by subtracting the
appropriate reference illuminations \eqref{eq:det_homog_meas} from
\eqref{eq:det_inhomog_meas}, that is
\[
\begin{aligned}
d_m(\vec{\vx}_r,\omega) &= |\hat{u}_m|^2 -
|\alpha|^2|\hat{u}_{i(m)}^0|^2 -
|\beta|^2|\hat{u}_{j(m)}^0|^2\\
&=
\vg^* \mF_m \begin{bmatrix}
0 & \overline{\alpha}\beta \\ \overline{\beta} \alpha & 0 \end{bmatrix}
\mF_m^\tr \vg.
\end{aligned}
\]

\subsection{Phase retrieval problem as a linear system}
\label{sec:prsys}
By recalling that $\vg = \vg_0 + \vp$, the measurements $d_m$ 
are
\[
\begin{aligned}
d_m(\vec{\vx}_r,\omega) =
(\vg_0+\vp)^*\mF_m \begin{bmatrix}
0 & \overline{\alpha}\beta \\ \overline{\beta} \alpha &
0 \end{bmatrix} \mF_m^\tr(\vg_0+\vp).
\end{aligned}
\]
To make the following expressions concise, we denote by $\mD$ the
Hermitian matrix
\begin{equation}
\mD = \begin{bmatrix}
0 & \overline{\alpha}\beta \\ \overline{\beta} \alpha &
0 \end{bmatrix}.
\label{eq:mD}
\end{equation}
By the weak scattering assumption, we may neglect the quadratic terms in
$\vp$ and collect all measurements for $m=1,\ldots,N_p$ as a single
vector $\vd \in \real^{N_p}$:
\begin{equation}\label{eq:detdata}
\begin{aligned}
\begin{bmatrix} 
d_1(\vec{\vx}_r,\omega)\\
d_2(\vec{\vx}_r,\omega)\\
\vdots\\
d_{N_p}(\vec{\vx}_r,\omega)\\
\end{bmatrix}
\approx
\vd(\vec{\vx}_r,\omega)
&= \Re\begin{pmatrix}
\begin{bmatrix}
\vg_0^*\mF_1 \mD \mF_1^\tr \\
\vg_0^*\mF_2 \mD \mF_2^\tr\\
\vdots\\
\vg_0^*\mF_{N_p} \mD \mF_{N_p}^\tr
\end{bmatrix}
(\vg_0+2\vp)
\end{pmatrix}\\
&= \mM(\vec{\vx}_r,\omega)
\begin{bmatrix}
\Re(\vg_0+2\vp)\\
\Im(\vg_0+2\vp)
\end{bmatrix},
\end{aligned}
\end{equation} 
where the $N_p\times 2N$ real matrix $\mM$ is given
by
\begin{equation}\label{eq:M}
\mM(\vec{\vx}_r,\omega)=\begin{bmatrix}
\Re(\vg_0^* \mF_1 \mD \mF_1^\tr) & -\Im(\vg_0^* \mF_1 \mD \mF_1^\tr) \\
\Re(\vg_0^* \mF_2 \mD \mF_2^\tr) & -\Im(\vg_0^* \mF_2 \mD \mF_2^\tr) \\
\vdots & \vdots\\
\Re(\vg_0^* \mF_{N_p} \mD \mF_{N_p}^\tr) & -\Im(\vg_0^* \mF_{N_p} \mD
\mF_{N_p}^\tr) \\
\end{bmatrix}.
\end{equation}

Note that by construction, the matrix $\mM$ has at most 4 non-zero
elements per row, and is thus a very sparse matrix for $N$ large.

\section{Analysis of the phase retrieval linear system}\label{sec:invertible} 
We now address the question of whether there is enough information in
the measurements $\vd \in \real^{N_p}$ to recover the array response
vector $\vp \in \complex^{N}$. The main result of this section is
Theorem~\ref{thm:invertibility}, where we show that with appropriately
chosen pairs of sources, $\mM^\dagger \vd$ (i.e. the Moore Penrose
pseudoinverse of $\mM$ times $\vd$) gives $\vp$ up to a complex scalar
multiple of the vector $\vg_0$, which is known a priori.

Let us first consider the case where we take measurements using all
possible source pairs, i.e. that $N_p = N(N-1)/2$. Clearly, we need
$N\geq 5$ to guarantee that $N_p \geq 2 N$, i.e. that the matrix $\mM$
has more rows than columns and the system $\vd=\mM[\Re(\vg_0+2\vp)^\tr,
\Im(\vg_0+2\vp)^\tr]^\tr$ is overdetermined.

Instead of using all possible source pairs, we use the following
strategy which for $N \geq 5$, guarantees $N_p = 2N$.

{\bf Strategy to choose source pairs}:
\begin{enumerate}
 \item All 10 distinct source pairs between the source positions
 $\{1,\ldots,5\}$.
 \item For source position $s>5$, choose any two different source pairs of the form
 $(s,i)$ and $(s,j)$ where $i,j \in \{1,\ldots,5\}$.
\end{enumerate}
This strategy is illustrated in
figure~\ref{fig:strategy}. More source pairs can be
added without affecting the recoverability of $\vp$ 
(Theorem~\ref{thm:invertibility}). We now make the following assumption
on the first 5 source positions.
\begin{figure}
 \begin{center}
 \includegraphics[width=0.4\textwidth]{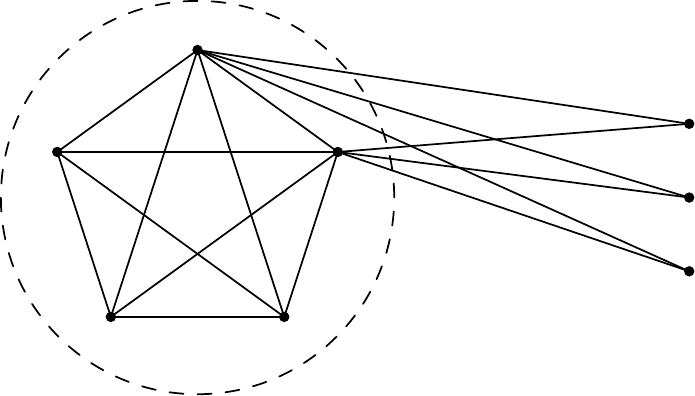}
 \end{center}
 \caption{An example illustrating the strategy to choose the source
 pairs for $N=8$ source positions. Each source position is represented
 by a node in the graph, and source pairs are represented by edges. The
 first 5 source positions are in the circle.}
 \label{fig:strategy}
\end{figure}
\begin{assumption}\label{assump:rankcond} We assume the
receiver is located at a position $\vec{\vx}_r$ such that for
$i,j=1,\ldots,5$, the vector $\vg_0 \equiv \vg_0(\vec{\vx}_r,\omega)$
satisfies
\begin{equation}\label{eq:rankcond}
\Re\big(\vg_0\big)_i\neq0,~
\Im\big(\vg_0\big)_i\neq0, ~\text{and}~
\Re\big(\vg_0\big)_i\Im\big(\vg_0\big)_j
\neq
\Re\big(\vg_0\big)_j\Im\big(\vg_0\big)_i.
\end{equation}
Additionally for one pair $i,j\in[1,\ldots,5]$ we assume
\begin{equation}\label{eq:rankcond2}
\begin{aligned}
\cos(\phi)&\left(\Re\big(\vg_0\big)_i\Im\big(\vg_0\big)_j-\Re\big(\vg_0\big)_j\Im\big(\vg_0\big)_i\right)
\neq\\
&
\qquad\qquad-\sin(\phi)\left(\Re\big(\vg_0)_i\Re\big(\vg_0\big)_j+\Im\big(\vg_0\big)_i\Im\big(\vg_0\big)_j\right).
\end{aligned}
\end{equation}
\end{assumption}
This assumption is by no means necessary for the end result
(Theorem~\ref{thm:invertibility}) to hold, but it is sufficient. If
$d=3$, condition \eqref{eq:rankcond} is equivalent to the geometric
condition
\begin{equation}\label{eq:geomcond}
|\vec{\vx}_{i} - \vec{\vx}_r|  \notin \frac{\lambda}{4} \zed
~\text{and}~
|\vec{\vx}_{i} - \vec{\vx}_r| - |\vec{\vx}_{j} - \vec{\vx}_r| \notin
\frac{\lambda}{2} \zed ~\text{for all $i,j=1,\ldots,5$},
\end{equation}
while condition \eqref{eq:rankcond2} implies for one pair
$i,j\in[1,\ldots,5]$ that
\begin{equation}\label{eq:geomcond2}
|\vec{\vx}_r-\vec{\vx}_i|-|\vec{\vx}_r-\vec{\vx}_j| \notin
\frac{\lambda}{2}\zed-\frac{\lambda}{2\pi}\phi.
\end{equation}
Here the set $(\lambda/2)\zed$ is the set of all integer multiples of
$\lambda/2$, where $\lambda=2\pi c_0/\omega$ is the wavelength. In
$d=2$, conditions similar to \eqref{eq:geomcond} and
\eqref{eq:geomcond2} are sufficient when the sources and receivers are
far apart because of the Hankel function asymptotic 
\[
 H_0^{(1)} (t) = \sqrt{\frac{2}{\pi t}} \exp[\ii(t-(\pi/4))] (1+\bigO(1/t)),
 ~\text{as $t \to \infty$}.
\]

\begin{lemma}\label{lem:nullspace}
Provided $\alpha \neq 0$, $\beta \neq 0$,
$\Re(\overline{\alpha}\beta) \neq0$, the
source pairs are chosen with the above strategy and
assumption~\ref{assump:rankcond} holds, the matrix $\mM \equiv
\mM(\vec{\vx}_r,\omega)$ satisfies
\begin{equation}
\nullspace \mM = \linspan \left\{
\begin{bmatrix} -\Im\big(\vg_0(\vec{\vx}_r,\omega)\big)\\ 
\Re\big(\vg_0(\vec{\vx}_r,\omega)\big)\end{bmatrix}\right\}.
\end{equation}
\end{lemma}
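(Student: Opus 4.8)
\emph{Proof strategy.} The plan is to rewrite the membership $[\Re\vw^\tr,\Im\vw^\tr]^\tr\in\nullspace\mM$ as a family of scalar equations on $\vw\in\complex^N$, to verify that the claimed vector solves them, and then to show that the equations coming from the first five sources already force $\vw$ to be a real multiple of $\ii\vg_0$, after which the remaining sources are handled one at a time. Write $\gamma\equiv\overline\alpha\beta$, so that the matrix $\mD$ in \eqref{eq:mD} has off-diagonal entries $\gamma$ and $\overline\gamma$, and abbreviate $g_k\equiv\big(\vg_0(\vec{\vx}_r,\omega)\big)_k$. Since $\mF_m=[\ve_{i(m)},\ve_{j(m)}]$, a direct computation shows that the $m$-th entry of $\mM\,[\Re\vw^\tr,\Im\vw^\tr]^\tr$ equals $\Re\big(\vg_0^*\mF_m\mD\mF_m^\tr\vw\big)=\Re\big(\gamma\,\overline{g_i}\,w_j+\overline\gamma\,\overline{g_j}\,w_i\big)$, where $(i,j)=(i(m),j(m))$. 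Hence $[\Re\vw^\tr,\Im\vw^\tr]^\tr\in\nullspace\mM$ if and only if
\begin{equation}\label{eq:prnull}
\Re\big(\gamma\,\overline{g_i}\,w_j+\overline\gamma\,\overline{g_j}\,w_i\big)=0
\end{equation}
holds for every chosen source pair $(i,j)$.

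Taking $\vw=\ii\vg_0$ turns the argument of $\Re(\cdot)$ in \eqref{eq:prnull} into $\ii\big(\gamma\,\overline{g_i}\,g_j+\overline{\gamma\,\overline{g_i}\,g_j}\big)=2\ii\,\Re(\gamma\,\overline{g_i}\,g_j)$, which is purely imaginary; since the vector $[-\Im\vg_0^\tr,\Re\vg_0^\tr]^\tr$ corresponds to $\ii\vg_0$ under the identification $\real^{2N}\ni[\vx^\tr,\vy^\tr]^\tr\leftrightarrow\vx+\ii\vy$, this proves the inclusion $\linspan\{[-\Im\vg_0^\tr,\Re\vg_0^\tr]^\tr\}\subseteq\nullspace\mM$.

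For the reverse inclusion, let $\vw$ satisfy \eqref{eq:prnull} for all chosen pairs. The core of the argument is the complete graph on sources $1,\dots,5$. By \eqref{eq:rankcond} one has $g_k\neq0$ for $k\le5$, so I set $\zeta_k=w_k/(\ii g_k)=p_k+\ii q_k$ and check that, for $i,j\le5$, \eqref{eq:prnull} for the pair $(i,j)$ is equivalent to the single real equation
\begin{equation}\label{eq:prreduced}
(p_j-p_i)\,\Im\mu_{ij}+(q_i+q_j)\,\Re\mu_{ij}=0,\qquad \mu_{ij}\equiv\gamma\,\overline{g_i}\,g_j .
\end{equation}
These ten equations form a real linear system in the ten real unknowns $p_1,\dots,p_5,q_1,\dots,q_5$, one solution of which is $p_k\equiv t$, $q_k\equiv0$ (that is, $w_k=t\,\ii g_k$ for $k\le5$). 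The task is to prove that this system has rank exactly nine, so that this is the only solution; this is the main obstacle, being a finite but bookkeeping-heavy linear-algebra computation. It is also exactly where both parts of Assumption~\ref{assump:rankcond} enter: condition \eqref{eq:rankcond} forces the arguments $\arg g_k$ to be pairwise distinct modulo $\pi$ and to avoid $(\pi/2)\zed$, which ensures that the minors occurring in an elimination on \eqref{eq:prreduced} do not vanish and that $s_{ij}\equiv\Im(\overline{g_i}\,g_j)\neq0$ (in particular this rules out the degenerate configurations in which all the $\Re\mu_{ij}$, or all the $\Im\mu_{ij}$, vanish), while \eqref{eq:rankcond2}, which amounts to $\Im\mu_{ij}\neq0$ for one pair, supplies the last independent relation needed to reach rank nine.

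Granting $w_k=t\,\ii g_k$ for $k\le5$, I finish by treating each source $s>5$ in turn. The strategy supplies two pairs $(s,i)$ and $(s,j)$ with distinct $i,j\in\{1,\dots,5\}$; substituting the now-known $w_i=t\,\ii g_i$ and $w_j=t\,\ii g_j$ into \eqref{eq:prnull} for these two pairs yields a $2\times2$ real linear system for $(\Re w_s,\Im w_s)$ whose coefficient matrix has rows $(\Re(\gamma g_i),\Im(\gamma g_i))$ and $(\Re(\gamma g_j),\Im(\gamma g_j))$, and hence determinant $\Im\big(|\gamma|^2\,\overline{g_i}\,g_j\big)=|\gamma|^2 s_{ij}$. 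Since $\alpha,\beta\neq0$ and $i\neq j$ lie in $\{1,\dots,5\}$, \eqref{eq:rankcond} gives $s_{ij}\neq0$, so the system is nonsingular; because $w_s=t\,\ii g_s$ is a solution of it — it is the $s$-th component of $t\,\ii\vg_0$, which satisfies \eqref{eq:prnull} by the computation in the second paragraph — it must be \emph{the} solution. Thus $\vw=t\,\ii\vg_0$, which is the reverse inclusion. (Appending further source pairs only adds rows to $\mM$, and so cannot enlarge $\nullspace\mM$.)
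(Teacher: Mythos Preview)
Your outline follows the same architecture as the paper's proof: verify directly that $[-\Im\vg_0^\tr,\Re\vg_0^\tr]^\tr$ lies in the nullspace, show that on the first five sources the system already has corank one, and then pick off each source $s>5$ with a nonsingular $2\times2$ system (the paper phrases this last part as induction on $N$). Your complex rewriting of the nullspace condition as $\Re(\gamma\,\overline{g_i}w_j+\overline\gamma\,\overline{g_j}w_i)=0$ and the substitution $\zeta_k=w_k/(\ii g_k)$ are tidier than the paper's coordinate expressions, and your identification of \eqref{eq:rankcond2} with $\Im\mu_{ij}\neq0$ is correct. The $2\times2$ determinant $|\gamma|^2\Im(\overline{g_i}g_j)$ in your $s>5$ step is exactly the paper's $A_1^+B_2^--A_2^+B_1^-=|\alpha|^2|\beta|^2(a_1b_2-a_2b_1)$.

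The genuine gap is the five-source rank-nine step, which you flag as ``the main obstacle'' and then leave as a heuristic (pairwise-distinct arguments, nonvanishing $s_{ij}$, one nonzero $\Im\mu_{ij}$). That heuristic is suggestive but is not a proof: nothing in your reduced system \eqref{eq:prreduced} makes it transparent that these conditions force rank nine without actually tracking an elimination. The paper closes this gap head-on: it writes out the $10\times10$ matrix $\mM^{(5)}$ and computes its leading $9\times9$ principal minor in closed form, obtaining (up to a nonzero scalar) the product
\[
\cos^2\phi\;\big(\cos\phi\,(b_3a_1-b_1a_3)+\sin\phi\,(b_3b_1+a_3a_1)\big)\;a_5\;(b_3a_2-b_2a_3)\;(b_2a_1-a_2b_1)\;(b_5a_4-a_5b_4).
\]
Each factor is nonzero under the stated hypotheses: $\cos\phi\neq0$ from $\Re(\overline\alpha\beta)\neq0$, the second factor from \eqref{eq:rankcond2} (this is your $\Im\mu_{13}/|\gamma|$), and the remaining factors from \eqref{eq:rankcond}. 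To complete your proposal you need either this computation or an equivalent explicit argument on your system \eqref{eq:prreduced}.
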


\begin{proof}
For clarity of exposition, we adopt the notation
\[
a_i = \Re\big(\vg_0\big)_i, ~b_i =
\Im\big(\vg_0\big)_i,
\]
for $i=1,\ldots,N$ and with $\vg_0 \equiv \vg_0(\vec{\vx}_r,\omega)$.
The proposed vector spanning the nullspace is $[\vv^\tr,\vw^\tr]^\tr = [ -
\Im(\vg_0)^\tr, \Re(\vg_0)^\tr]^\tr$ and has components $v_i = -b_i$ and
$w_i = a_i$ for $i=1,\ldots,N$. 

The proof is by induction on the number of sources $N$. For the purpose
of the induction argument, we denote by $\mM^{(N)}$ the measurement
matrix $\mM(\vec{\vx_r},\omega)$ corresponding to $N$ sources, which if
we use the strategy explained above, must be a $2N \times 2N$ real
matrix. For the base case $N=5$ of the induction, $\mM^{(5)}$ can be
written as 
\[
\mM^{(5)} = 
\begin{bmatrix}
A_2^- & A_1^+ & 0 & 0 & 0 & 
B_2^+ & B_1^- & 0 & 0 & 0 \\
A_3^- & 0 & A_1^+ & 0 & 0  &  
B_3^+ & 0 & B_1^- & 0 & 0  \\
A_4^- & 0 & 0 & A_1^+  & 0 & 
B_4^+ & 0 & 0 & B_1^-  & 0 \\
A_5^- & 0 & 0 & 0 & A_1^+  & 
B_5^+ & 0 & 0 & 0 & B_1^-  \\
0 & A_3^- & A_2^+ & 0 & 0 & 
0 & B_3^+ & B_2^- & 0 & 0 \\
0 & A_4^- & 0 & A_2^+ & 0 & 
0 & B_4^+ & 0 & B_2^- & 0 \\
0 & A_5^- & 0 & 0 & A_2^+ & 
0 & B_5^+ & 0 & 0 & B_2^- \\
0 & 0 & A_4^- & A_3^+ & 0 & 
0 & 0 & B_4^+ & B_3^- & 0 \\
0 & 0 & A_5^- & 0 & A_3^+ & 
0 & 0 & B_5^+ & 0 & B_3^- \\
0 & 0 & 0 & A_5^- & A_4^+ & 
0 & 0 & 0 & B_5^+ & B_4^- \\
\end{bmatrix},
\]
where we have used
\begin{equation}\label{eq:realimagM}
\begin{aligned}
A_i^{\pm} &= |\alpha||\beta| (\cos(\phi) a_i\pm\sin(\phi) b_i),\qquad 
B_i^{\pm} &= |\alpha||\beta| (\cos(\phi) b_i\pm\sin(\phi) a_i).
\end{aligned}
\end{equation}
Using the expressions \eqref{eq:realimagM}, the leading principal
$9\times 9$ minor of $\mM^{(5)}$ is 
\[
\begin{aligned}
|\mM_{1:9,1:9}^{(5)}| &=
-4|\alpha|^9|\beta|^9 \cos^2(\phi)\left(\cos(\phi)(b_3a_1-b_1a_3)+\sin(\phi)(b_3b_1+a_3a_1)\right)\times\\
&\qquad a_5(b_3a_2-b_2a_3)(b_2a_1-a_2b_1)(b_5a_4-a_5b_4).
\end{aligned}
\]
Therefore if assumption~\ref{assump:rankcond} holds and $\cos\phi\neq 0$
(which we get from $\Re(\overline{\alpha} \beta) \neq 0$),
we must have $\rank \mM^{(5)} \geq 9$. By direct calculations, we have
that
\[
\nullspace \mM^{(5)} = \linspan \left\{ [
-b_1, \ldots, -b_5,
a_1, \ldots, a_5 ]^\tr \right\}.
\]
Thus the base case $N=5$ holds and $\rank \mM^{(5)} = 9$.

For the induction hypothesis we assume that $N\geq5$ and that 
\[
\nullspace \mM^{(N)} = \linspan \left\{ [-\vb^\tr, \va^\tr]^\tr
\right\},
\]
where $\va = [a_1,\ldots,a_N]^\tr$ and $\vb = [b_1,\ldots,b_N]^\tr$. If
the first $2N$ source pairs to construct $\mM^{(N+1)}$ are chosen in
exactly the same way as the source pairs to construct $\mM^{(N)}$, and
the last two source pairs are, e.g. $(N+1,1)$ and $(N+1,2)$ we must
have for any $\vv,\vw \in \real^N$ and $v_{N+1},w_{N+1} \in \real$
that
\begin{equation}
 \mM^{(N+1)} \begin{bmatrix} \vv\\v_{N+1}\\\vw\\w_{N+1}\end{bmatrix}
 = \begin{bmatrix} \mM^{(N)} \begin{bmatrix}\vv\\\vw\end{bmatrix} \\
  A_{N+1}^-v_1 + A_1^+v_{N+1} + B_{N+1}^+w_1 + B_1^-w_{N+1}\\
  A_{N+1}^-v_2 + A_2^+v_{N+1} + B_{N+1}^+w_2 + B_2^-w_{N+1}
  \end{bmatrix}.
  \label{eq:induction}
\end{equation}
Hence if $[\vv^\tr,v_{N+1},\vw^\tr,w_{N+1}]^\tr\in \nullspace \mM^{(N+1)}$,
then we must have $[\vv^\tr,\vw^\tr]^\tr \in \nullspace
\mM^{(N)}$, i.e. there is some real $k\neq 0$ such that $\vv = -k \vb$
and $\vw = k \va$. Equating the last two components of
\eqref{eq:induction} to zero and using that $v_i = -kb_i$ and
$w_i = ka_i$ for $i=1,2$, one gets the linear system
\[
\begin{bmatrix}A_1^+ & B_1^-\\ A_2^+ & B_2^-\end{bmatrix}\begin{bmatrix}v_{N+1}\\ w_{N+1}\end{bmatrix} =
\begin{bmatrix}
kA_{N+1}^-b_1-kB_{N+1}^+a_1\\
kA_{N+1}^-b_2-kB_{N+1}^+a_2
\end{bmatrix}.
\]
Since
$A_1^+B_2^--A_2^+B_1^-=|\alpha|^2|\beta|^2(a_1b_2-a_2b_1)\neq0$,
the unique solution to this system is 
$v_{N+1} = -kb_{N+1}$ and $w_{N+1} = ka_{N+1}$. 
Thus the desired result holds for any $N\geq 5$.
\end{proof}

In figure~\ref{fig:mcond}, we show the condition number of
$\mM(\vec{\vx}_r,\omega)$ (i.e. $\sigma_1/\sigma_{2N-1}$ the ratio of
the largest singular value to the smallest non-zero singular value )
over a frequency band. The experimental setup is that given in
\S\ref{sec:numerics} and corresponds to sending exactly the same signal
from both locations in a source pair (i.e. $\alpha = \beta$ and $\phi =
0$). Figure~\ref{fig:mcond}(a) shows the condition number
of $\mM$ with $\vec{\vx}_r$ chosen so that
assumption~\ref{assump:rankcond} is satisfied, while
figure~\ref{fig:mcond}(b) shows the condition number of $\mM$ with
$\vec{\vx}_r$ chosen so that assumption~\ref{assump:rankcond} is
violated for some frequencies. In both cases, we see improved
conditioning by using more than $2N$ source pair experiments.

\begin{figure}
\centering
\begin{tabular}{cc}
(a) & (b)\\
\includegraphics[width=0.5\textwidth]{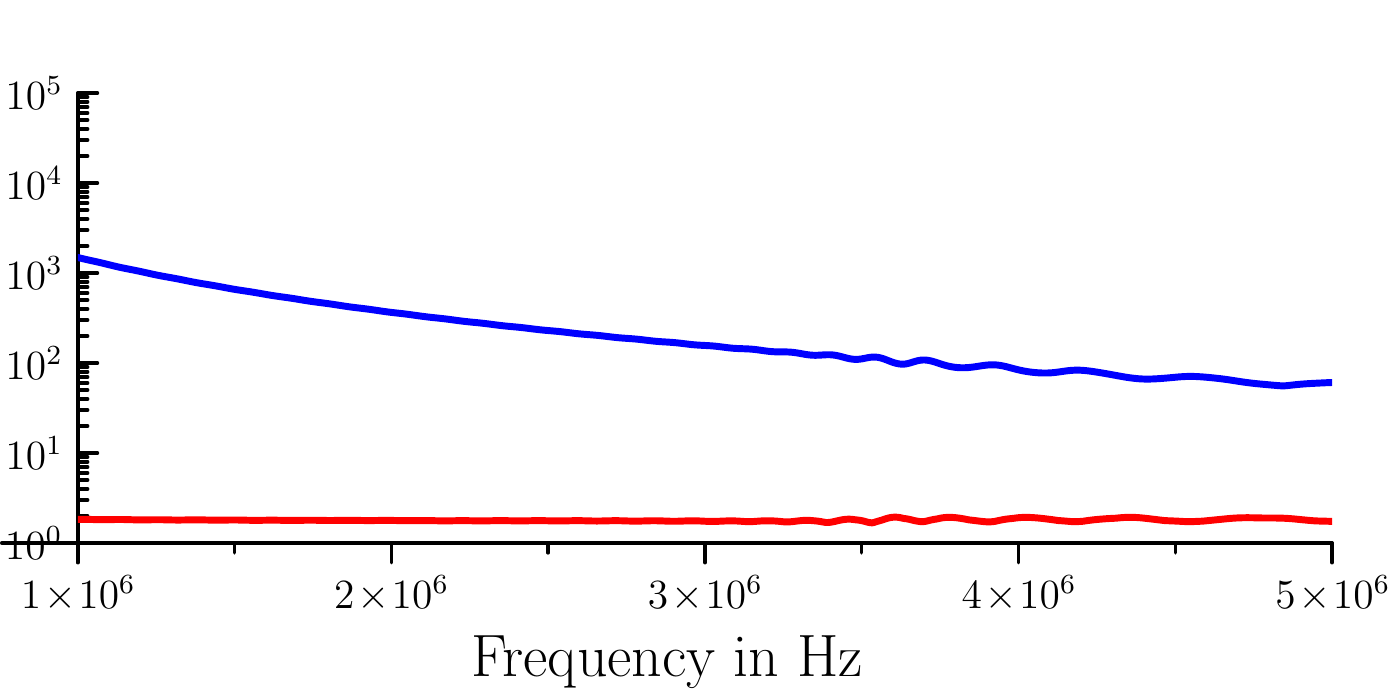} &
\includegraphics[width=0.5\textwidth]{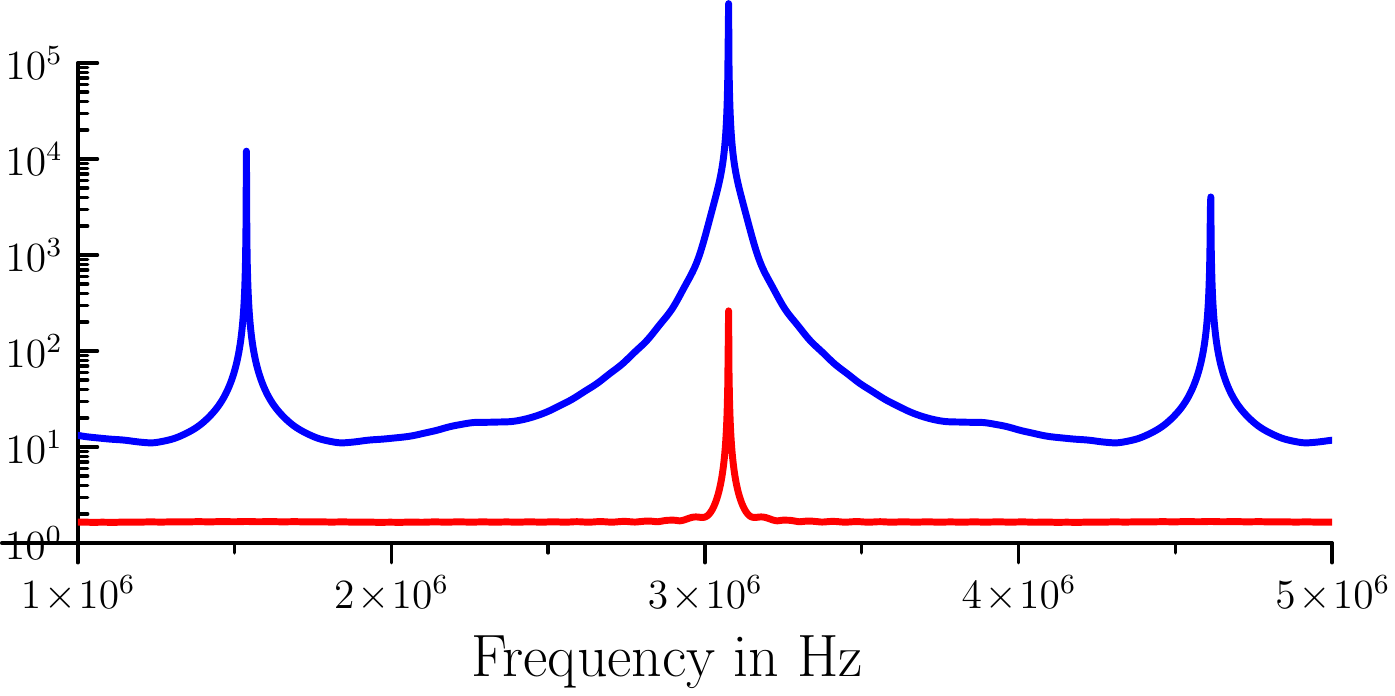}
\end{tabular}
\caption{Condition number of $\mM(\vec{\vx}_r,\omega)$ with receiver
location $\vec{\vx}_r$ chosen so that (a)
assumption~\ref{assump:rankcond} is satisfied, (b)
assumption~\ref{assump:rankcond} is violated for some frequencies. The
number of source pair experiments used is  $N_p=N(N-1)/2$ (in red) and
 $N_p=2N$ (in blue).} \label{fig:mcond}
\end{figure}

We now tie $\mM^\dagger \vd$ to the array response vector $\vp$.

\begin{theorem}\label{thm:invertibility}
Under the assumptions of lemma~\ref{lem:nullspace} it is possible to
recover $\vp \equiv \vp(\vec{\vx}_r,\omega)$ from the intensity data
$\vd$ up to a complex scalar multiple of $\vg_0 \equiv
\vg_0(\vec{\vx}_r,\omega)$, more precisely, $\mM^\dagger \vd$
determines the vector $\vp+\zeta\vg_0$ where
\begin{equation}\label{eq:scalarproj}
 \zeta \equiv \zeta(\vec{\vx}_r,\omega) = \frac{1}{2} -
\ii\frac{\Im(\vg_0^*\vp)}{\vg_0^*\vg_0}.
\end{equation}
\end{theorem}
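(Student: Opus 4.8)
\emph{Proof proposal.} The plan is to use the defining property of the Moore--Penrose pseudoinverse. After neglecting the quadratic terms in $\vp$ as in \eqref{eq:detdata}, the data satisfies $\vd = \mM\vx$ with $\vx = [\Re(\vg_0+2\vp)^\tr,\,\Im(\vg_0+2\vp)^\tr]^\tr \in \real^{2N}$. Hence $\vd$ lies in the range of $\mM$, so $\mM^\dagger\vd$ is the least-norm solution of $\mM\vx' = \vd$, i.e. the orthogonal projection of $\vx$ onto the row space of $\mM$. Writing $P$ for the orthogonal projector onto $\nullspace\mM$, this reads $\mM^\dagger\vd = \vx - P\vx$, and everything reduces to computing $P\vx$ explicitly.

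By Lemma~\ref{lem:nullspace} we have $\nullspace\mM = \linspan\{\bm{n}\}$ with $\bm{n} = [-\Im(\vg_0)^\tr,\,\Re(\vg_0)^\tr]^\tr$, so $P\vx = (\bm{n}^\tr\vx / \bm{n}^\tr\bm{n})\,\bm{n}$. The normalization is $\bm{n}^\tr\bm{n} = \|\Re(\vg_0)\|^2 + \|\Im(\vg_0)\|^2 = \vg_0^*\vg_0$. In $\bm{n}^\tr\vx$ the terms containing only $\vg_0$ cancel, since $\Re(\vg_0)^\tr\Im(\vg_0) = \Im(\vg_0)^\tr\Re(\vg_0)$, leaving $\bm{n}^\tr\vx = 2\big(\Re(\vg_0)^\tr\Im(\vp) - \Im(\vg_0)^\tr\Re(\vp)\big) = 2\,\Im(\vg_0^*\vp)$, the last equality being the imaginary part of $\vg_0^*\vp = (\Re(\vg_0) - \ii\,\Im(\vg_0))^\tr(\Re(\vp) + \ii\,\Im(\vp))$. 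Thus $\mM^\dagger\vd = \vx - \big(2\,\Im(\vg_0^*\vp)/\vg_0^*\vg_0\big)\,\bm{n}$.

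It remains to read this real $2N$-vector as a complex $N$-vector through the correspondence $[\Re(\vz)^\tr,\,\Im(\vz)^\tr]^\tr \leftrightarrow \vz$, under which $\vx$ corresponds to $\vg_0 + 2\vp$ and $\bm{n}$ to $-\Im(\vg_0) + \ii\,\Re(\vg_0) = \ii\,\vg_0$. Therefore $\mM^\dagger\vd$ corresponds to $2\vp + \vg_0\big(1 - 2\ii\,\Im(\vg_0^*\vp)/\vg_0^*\vg_0\big) = 2\,(\vp + \zeta\vg_0)$ with $\zeta$ exactly as in \eqref{eq:scalarproj}, which determines $\vp + \zeta\vg_0$.

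The one step that deserves care is conceptual rather than computational: one must confirm that, to the order retained in the Born approximation, $\vd$ lies exactly in the range of $\mM$, so that $\mM^\dagger\vd$ carries no least-squares residual and equals $\mM^\dagger\mM\vx = \vx - P\vx$. Granting that, the remainder is bookkeeping with the explicit nullspace vector from Lemma~\ref{lem:nullspace} and with the real--complex dictionary; the appearance of the specific $\zeta$ is forced because $\mM$ is blind precisely to the component of the unknown along $\ii\vg_0$.
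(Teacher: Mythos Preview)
Your argument is correct and follows essentially the same route as the paper: both compute $\mM^\dagger\vd$ as the orthogonal projection of $[\Re(\vg_0+2\vp)^\tr,\Im(\vg_0+2\vp)^\tr]^\tr$ onto the orthogonal complement of the one-dimensional nullspace from Lemma~\ref{lem:nullspace}, evaluate the projection coefficient to obtain $2\Im(\vg_0^*\vp)/\vg_0^*\vg_0$, and then reinterpret the resulting real $2N$-vector as a complex $N$-vector. The only cosmetic difference is that the paper absorbs the factor of $2$ when passing to the complex form (writing $\tfrac{1}{2}[\Re(\cdot)]+\tfrac{\ii}{2}[\Im(\cdot)]$), whereas you carry it to the end and note that $\mM^\dagger\vd$ corresponds to $2(\vp+\zeta\vg_0)$; either way the recovered quantity is $\vp+\zeta\vg_0$.
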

\begin{proof}
Recalling the form of our data we have
\[
\vd = 
\mM
\begin{bmatrix}
\Re\big(\vg_0+2\vp\big)
\\
\Im\big(\vg_0+2\vp\big)\end{bmatrix}.
\]
By lemma~\ref{lem:nullspace}, the matrix $\mM$ has a one dimensional
nullspace therefore 
\[
\mM^\dagger \vd =
\begin{bmatrix}
\Re(\vg_0+2\vp)
\\
\Im(\vg_0+2\vp)\end{bmatrix}-\tilde{\zeta}\begin{bmatrix}-\Im(\vg_0)\\\Re(\vg_0)\end{bmatrix},
\]
where  $\tilde{\zeta}\in\real$ is found by enforcing orthogonality with
$[-\Im(\vg_0)^\tr, \Re(\vg_0)^\tr]^\tr$, i.e.
\[
\tilde{\zeta} = \frac{1}{\vg_0^*\vg_0}[-\Re(\vg_0 + 2\vp)^\tr \Im(\vg_0) + \Im(\vg_0 +
2\vp)^\tr \Re(\vg_0)] = 
\frac{2\Im(\vg_0^*\vp)}{\vg_0^*\vg_0}.
\]
Thus from $\mM^\dagger \vd$ we can get the
$\complex^N$ vector
\[
\frac{1}{2}[\Re(\vg_0+2\vp)+\tilde{\zeta}\Im(\vg_0)]
+\frac{\ii}{2}[\Im(\vg_0+2\vp)-\tilde{\zeta}\Re(\vg_0)]
=\frac{1}{2}\vg_0+\vp-\frac{\ii}{2}\tilde{\zeta}\vg_0
= \vp+\zeta\vg_0,
\]
where the scalar $\zeta \equiv \zeta(\vec{\vx}_r,\omega)\in\complex$ is
given by \eqref{eq:scalarproj}.
\end{proof}

\section{Kirchhoff migration imaging}\label{sec:migration} 
We now show that we can image with the reconstructed field
$\vp+\zeta\vg_0$ instead of $\vp$ by using Kirchhoff migration. This
is because the Kirchhoff migration image of $\zeta\vg_0$ is negligible
compared to the image of $\vp$ for high frequencies. In order to show
that this nullspace vector does not affect the imaging, we need to make
sure the receiver satisfies the following condition.

\begin{assumption}[Geometric imaging
conditions]\label{assump:geometricimg} For a scattering potential
with support contained inside an image window $\sW$, we assume $\vec{\vx}_r$ satisfies 
\begin{equation}\label{eq:geometricimg}
\frac{\vec{\vx}_s-\vec{\vx}_r}{|\vec{\vx}_s-\vec{\vx}_r|} \neq 
\frac{\vec{\vx}_s-\vec{\vy}}{|\vec{\vx}_s-\vec{\vy}|},
\end{equation}
for $s=1,\ldots,N$ and $\vec{\vy}\in \sW$. \end{assumption}

One way to guarantee assumption~\ref{assump:geometricimg} holds is to
place the receiver at location $\vec{\vx}_r$ outside of the shaded
region in figure~\ref{fig:geoimg}.

\begin{figure}
\centering
\includegraphics[scale=0.8]{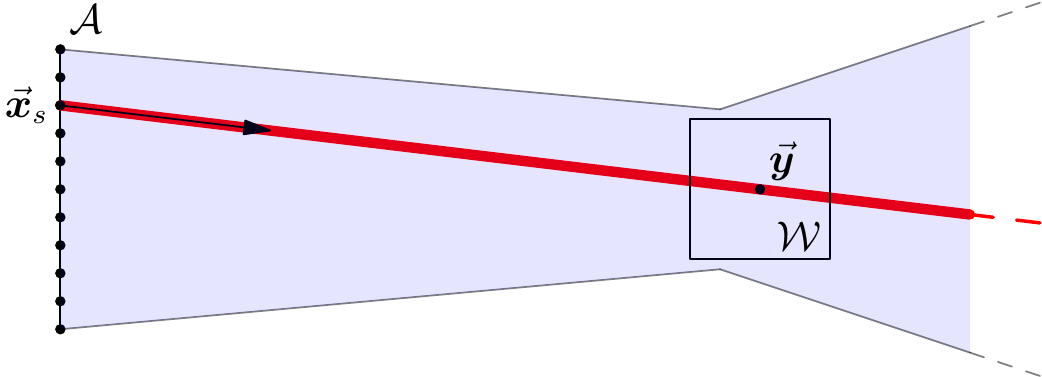}
\caption{Given an array $\sA$ and a region $\sW$ containing the
scatterers to image, assumption~\ref{assump:geometricimg} ensures the
receiver location $\vec{\vx}_r$ is outside of the
blue shaded region. This guarantees the Kirchhoff images using data $\vp$
and the recovered $\vp + \zeta \vg_0$ are essentially the same. The positive ray in the
direction $\vec{\vx}_s  - \vec{\vy}$ for particular $\vec{\vx}_s\in\sA$
and $\vec{\vy}\in \sW$ is indicated in red. If $\vec{\vx}_r$ is outside
the blue shaded region, we have
$(\vec{\vx}_s-\vec{\vx}_r)/|\vec{\vx}_s-\vec{\vx}_r| \neq
(\vec{\vx}_s-\vec{\vy})/|\vec{\vx}_s-\vec{\vy}|$ for all
$\vec{\vx}_s\in\sA$ and all $\vec{\vy}\in \sW$. }\label{fig:geoimg}
\end{figure}

\begin{theorem}\label{thm:migration}
Provided assumption~\ref{assump:geometricimg} holds, the image of the reconstructed array response vector is
\[
\KM[\vp+\zeta\vg_0,\omega](\vec{\vy}) \approx \KM[\vp,\omega](\vec{\vy}).
\]
\end{theorem}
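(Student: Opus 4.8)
The plan is to exploit the linearity of $\KM[\cdot,\omega](\vec{\vy})$ in its first argument, immediate from \eqref{eq:kirchoff}:
\[
\KM[\vp+\zeta\vg_0,\omega](\vec{\vy}) = \KM[\vp,\omega](\vec{\vy}) + \zeta\,\KM[\vg_0,\omega](\vec{\vy}),
\]
so the theorem follows once we show that the ``echo of the incident field'' $\zeta\,\KM[\vg_0,\omega](\vec{\vy})$ is, at high frequencies, negligible next to $\KM[\vp,\omega](\vec{\vy})$. Writing \eqref{eq:kirchoff} out and using \eqref{eq:green},
\[
\KM[\vg_0,\omega](\vec{\vy}) = \overline{\hat{G}}_0(\vec{\vy},\vec{\vx}_r,\omega)\sum_{s=1}^N \overline{\hat{G}_0(\vec{\vy},\vec{\vx}_s,\omega)}\,\hat{G}_0(\vec{\vx}_r,\vec{\vx}_s,\omega),
\]
and the summand carries the oscillatory phase $k\big(|\vec{\vx}_r-\vec{\vx}_s|-|\vec{\vy}-\vec{\vx}_s|\big)$; in $d=2$ one first replaces $H_0^{(1)}$ by its large-argument asymptotic recorded above, which is harmless since the sources and the receiver are far from $\sW$.

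Next I would regard the array sum as (a discretization of) the oscillatory surface integral $\int_{\sA} e^{\ii k\Phi(\vec{\vx}')} A(\vec{\vx}')\,d\vec{\vx}'$, with phase $\Phi(\vec{\vx}')=|\vec{\vx}_r-\vec{\vx}'|-|\vec{\vy}-\vec{\vx}'|$ and smooth non-vanishing amplitude $A$. The component of $\nabla\Phi(\vec{\vx}')$ tangent to $\sA$ is the projection onto $\sA$ of
\[
\frac{\vec{\vx}'-\vec{\vx}_r}{|\vec{\vx}'-\vec{\vx}_r|}-\frac{\vec{\vx}'-\vec{\vy}}{|\vec{\vx}'-\vec{\vy}|}.
\]
Assumption~\ref{assump:geometricimg} says these two unit vectors never coincide for $\vec{\vx}'\in\sA$, $\vec{\vy}\in\sW$ (and figure~\ref{fig:geoimg} shows that $\vec{\vx}_r$ and $\sW$ lie on the same side of the plane of $\sA$, so the two unit vectors cannot differ only in their normal component); hence the tangential gradient of $\Phi$ never vanishes and $\Phi$ has no stationary point on $\sA$. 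The non-stationary phase lemma then gives, uniformly in $\vec{\vy}\in\sW$,
\[
\KM[\vg_0,\omega](\vec{\vy}) = \bigO(\omega^{-1}) \qquad\text{as }\omega\to\infty,
\]
the $\bigO(\omega^{-1})$ accounting for the edge of the aperture (it is $\bigO(\omega^{-\infty})$ away from $\partial\sA$).

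Finally I would check that $\KM[\vp,\omega](\vec{\vy})$ does \emph{not} decay. Substituting \eqref{eq:arv} into \eqref{eq:kirchoff} and exchanging summation and integration,
\[
\KM[\vp,\omega](\vec{\vy}) = k^2\!\int\! d\vec{\vy}'\,\rho(\vec{\vy}')\,\overline{\hat{G}}_0(\vec{\vy},\vec{\vx}_r,\omega)\hat{G}_0(\vec{\vx}_r,\vec{\vy}',\omega)\sum_{s=1}^N\overline{\hat{G}_0(\vec{\vy},\vec{\vx}_s,\omega)}\,\hat{G}_0(\vec{\vy}',\vec{\vx}_s,\omega),
\]
and the inner array sum has phase $k\big(|\vec{\vy}'-\vec{\vx}_s|-|\vec{\vy}-\vec{\vx}_s|\big)$, which vanishes identically at $\vec{\vy}'=\vec{\vy}$; there the summand is $\sum_s|\hat{G}_0(\vec{\vy},\vec{\vx}_s,\omega)|^2>0$, so no cancellation occurs and, through the explicit prefactor $k^2=(\omega/c_0)^2$, $\KM[\vp,\omega](\vec{\vy})$ does not decay (indeed it typically grows) as $\omega\to\infty$ for $\vec{\vy}$ near the support of $\rho$ — precisely the region where one forms the image. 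Since in the Born regime $\vp$ is a small perturbation of $\vg_0$, so $\|\vp\|/\|\vg_0\|$ is small, \eqref{eq:scalarproj} gives $|\zeta|\le \tfrac12+\|\vp\|/\|\vg_0\|=\bigO(1)$ (using $|\vg_0^*\vp|\le\|\vg_0\|\|\vp\|$ and $\vg_0^*\vg_0=\|\vg_0\|^2$). Hence $\zeta\,\KM[\vg_0,\omega](\vec{\vy})=\bigO(\omega^{-1})$ is negligible at high frequencies compared with $\KM[\vp,\omega](\vec{\vy})$, which is the meaning of ``$\approx$'' in the statement.

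I expect the main obstacle to be making the non-stationary phase estimate honest: the aperture is a finite sum over $N$ point sources rather than a continuum, so one must either adopt the continuum-aperture idealization or control the Riemann-sum error (e.g.\ via Poisson summation), and one must verify that assumption~\ref{assump:geometricimg}, which forbids equality of the two \emph{full} $\real^d$ unit vectors, really forces the \emph{tangential} gradient of $\Phi$ on $\sA$ to be non-zero — as noted this uses that $\vec{\vx}_r$ and $\sW$ lie on the same side of the array. A secondary point is simply that ``$\approx$'' here is necessarily asymptotic in $\omega$, i.e.\ equality up to a relative error that tends to zero as $\omega\to\infty$.
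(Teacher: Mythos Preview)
Your approach is essentially the paper's: replace the array sum by a surface integral over $\sA$ and apply non-stationary phase, using assumption~\ref{assump:geometricimg} to rule out stationary points of $\Phi(\vec{\vx}')=|\vec{\vx}_r-\vec{\vx}'|-|\vec{\vy}-\vec{\vx}'|$. The one substantive difference is how you control $\zeta$. The paper does not simply bound $|\zeta|$; it writes out $\zeta(\vec{\vx}_r,\omega)=\tfrac12-\ii\,\Im(\vg_0^*\vp)/\|\vg_0\|^2$ as a second pair of oscillatory integrals over $\sA$ (and over the support of $\rho$) and applies the \emph{same} non-stationary phase argument to show $\zeta\to\tfrac12$ as $\omega\to\infty$. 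Your route is shorter: the crude Cauchy--Schwarz bound $|\zeta|\le\tfrac12+\|\vp\|/\|\vg_0\|=\bigO(1)$ already suffices once $\KM[\vg_0,\omega]$ is shown to decay, and you gain the explicit comparison with $\KM[\vp,\omega]$ (which the paper does not spell out). The paper's extra work buys a sharper statement about $\zeta$ itself, but for the theorem as stated your argument is enough. Your caveats about the discrete-to-continuum passage and about the tangential versus full gradient on $\sA$ are well taken; the paper handles both informally (it writes the full-gradient condition and ``neglects boundary effects''), so you are being more careful there, not less.
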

\begin{proof}
First we approximate the Kirchhoff imaging functional
\eqref{eq:kirchoff} by an integral over the array $\sA$, i.e.
\begin{equation}
\begin{aligned}
\KM[\zeta\vg_0,\omega](\vec{\vy}) &=
\overline{\hat{G}}(\vec{\vx}_r,\vec{\vy},\omega)\vg_0(\vec{\vy},\omega)^*\zeta(\vec{\vx}_r,\omega)\vg_0(\vec{\vx}_r,\omega)\\
&\sim
\zeta(\vec{\vx}_r,\omega)\int_{\sA}d\vx_s C(\vx_s) \exp\left(\ii\omega c_0^{-1} \big(|\vec{\vx}_s-\vec{\vx}_r|-|\vec{\vx}_s-\vec{\vy}|-|\vec{\vy}-\vec{\vx}_r|\big)\right),
\end{aligned}
\label{eq:kmg0}
\end{equation}
where the symbol $\sim$ means equal up to a constant and $C(\vx_s)$
collects smooth geometric spreading terms.

Let us first use the stationary phase method (see e.g.
\cite{Bleistein:2001:MMI}) on the integral over $\sA$. In the high
frequency limit $\omega \to \infty$, the dominant contribution comes
from stationary points of the phase, i.e. the points
$\vec{\vx}_s$ for which  
\[\nabla_{\vec{\vx}_s}\Big(
|\vec{\vx}_s-\vec{\vx}_r|-|\vec{\vx}_s-\vec{\vy}|-|\vec{\vy}-\vec{\vx}_r|\Big)
= 0.
\]
The stationary points must then satisfy
\[
\frac{\vec{\vx}_s-\vec{\vx}_r}{|\vec{\vx}_s-\vec{\vx}_r|} =
\frac{\vec{\vx}_s-\vec{\vy}}{|\vec{\vx}_s-\vec{\vy}|}.
\]
Thus by assumption~\ref{assump:geometricimg}, there are no stationary
points in the phase of the integral over the array $\sA$ appearing in
\eqref{eq:kmg0}. Neglecting boundary effects, this integral goes to zero
faster than any polynomial in $\omega$ (see e.g.
\cite{Bleistein:1986:AEI}).

We now show that in the high frequency limit $\omega \to \infty$,  we have
$\zeta(\vec{\vx}_r,\omega) \to 1/2$. 
Recalling \eqref{eq:scalarproj}, we have
\begin{equation}\label{eq:zetalim}
\begin{aligned}
\zeta(\vec{\vx}_r,\omega) &= \frac{1}{2} +
\frac{\vg_0(\vec{\vx}_r,\omega)^*\vp(\vec{\vx}_r,\omega) -
\vp(\vec{\vx}_r,\omega)^*\vg_0(\vec{\vx}_r,\omega)}{\vg_0(\vec{\vx}_r,\omega)^*\vg_0(\vec{\vx}_r,\omega)}\\
&\sim \frac{1}{2} + \frac{\omega^2}{c_0^2}\int d\vec{\vz}\int_{\sA}d\vx_s
 C(\vx_s)
\exp\left(\ii\omega c_0^{-1} \big(|\vec{\vx}_s-\vec{\vz}|+|\vec{\vz}-\vec{\vx}_r|-|\vec{\vx}_s-\vec{\vx}_r|\big)\right)\\
&\qquad - \frac{\omega^2}{c_0^2}\int d\vec{\vz}\int_{\sA}d\vx_s  C(\vx_s)
\exp\left(\ii\omega c_0^{-1}
\big(|\vec{\vx}_s-\vec{\vx}_r|-|\vec{\vx}_s-\vec{\vz}|-|\vec{\vz}-\vec{\vx}_r|\big)\right),\\
\end{aligned}
\end{equation}
where $C(\vec{\vx}_s)$ collects geometric spreading terms and
$|\vg_0(\vec{\vx}_r,\omega)|^{-2}$, which is actually independent of the
frequency $\omega$.  By assumption~\ref{assump:geometricimg}, the
integrals over $\sA$ in \eqref{eq:zetalim} do not have any stationary points.  Thus
if we neglect boundary terms, these integrals must go to zero faster
than any polynomial in $\omega$ (see e.g. \cite{Bleistein:1986:AEI}), meaning that
$\zeta(\vec{\vx}_r,\omega) \to 1/2$ as $\omega \to \infty$. Thus
$\KM[\zeta\vg_0,\omega](\vec{\vy}) \to 0$ as $\omega \to \infty$.
\end{proof}

\section{Autocorrelation measurements}\label{sec:stochillum} 
Up to this point we have assumed deterministic control over the source
illuminations. In this section we relax this control by driving the array with
stochastic signals. We start in section~\ref{sec:stoch:ergo} by recalling an
ergodicity result of Garnier and Papanicolaou~\cite{Garnier:2009:PSI} which
guarantees that if Gaussian stochastic processes are used to drive the sources,
the realization average of the total field can be well approximated by time
averages of the total field. Then in section~\ref{sec:stoch:pair} we adapt the
source pair illumination strategy to pairs of sources driven by two correlated
Gaussian processes, with (known) correlation identical for different pairs.
From these pairwise illuminations we measure empirical autocorrelations to
obtain intensity measurements that are essentially (up to ergodic averaging)
the same as those using the deterministic strategy of
section~\ref{sec:detillum}. 

\subsection{Stochastic array illuminations} \label{sec:stoch:ergo}
We consider array illuminations $\vf(t) \in \complex^N$ given by a
stationary Gaussian process with mean zero and with correlation the $N
\times N$ matrix function 
\begin{equation}\label{eq:noise}
\mR(\tau) = \langle \overline{\vf}(t)\vf^\tr(t+\tau)\rangle.
\end{equation}
Here $\langle \cdot \rangle$ denotes the expectation with respect to
realizations of $\vf$, and in an abuse of notation we have denoted by
$\vf(t)$ the time domain vector of signals driving the array. Since
$R_{s,s'}(\tau) = \langle
\overline{f}_s(t)f_{s'}(t+\tau)\rangle=\overline{\langle\overline{f}_{s'}(t+\tau)f_s(t)\rangle}=\overline{R_{s',s}(-\tau)}$
for $s,s'=1,\ldots,N$, we have $\mR(\tau) = \mR^*(-\tau)$ and so
$\hat{\mR}(\omega)$ is a Hermitian $N\times N$ matrix.

The total field $u$ at the receiver arising from the array illumination $\vf$ is, in the time domain,
\begin{equation}\label{eq:totalfieldstoch}
u(\vec{\vx}_r,t) = \sum_{s=1}^N \int dt'
G(\vec{\vx}_r,\vec{\vx}_s,t-t')f_s(t'),
\end{equation}
where $G$ is the Born approximation of the inhomogeneous Green function,
i.e.
\[
G(\vec{\vx}_r,\vec{\vx}_s,t) = \frac{1}{2\pi}\int d\omega e^{-\ii\omega
t}\Big[
\hat{G}_0(\vec{\vx}_r,\vec{\vx}_s,\omega)+k^2\int
d\vec{\vz}\rho(\vec{\vz})\hat{G}_0(\vec{\vx}_r,\vec{\vz},\omega)\hat{G}_0(\vec{\vz},\vec{\vx}_s,\omega)\Big].
\]
The empirical autocorrelation of $u$ is
\begin{equation}\label{eq:empautocorr}
\C(\vec{\vx}_r,\tau) = \frac{1}{2T}\int_{-T}^T \overline{u}(\vec{\vx}_r,t)u(\vec{\vx}_r,t+\tau)dt,
\end{equation}
where $T$ is a known measurement time. Following Garnier and
Papanicolaou~\cite{Garnier:2009:PSI}, we formulate
proposition~\ref{prop:statstab} regarding the statistical stability and
ergodicity of \eqref{eq:empautocorr}. This proposition is essentially
the same as \cite[Proposition 4.1]{Garnier:2009:PSI}, but we make small
modifications to allow for complex fields and more general correlations
in space.  We include it here for the sake of completeness and the proof
can be found in appendix~\ref{sec:appendix}.

\begin{proposition}\label{prop:statstab}
Assume $\vf$ satisfies \eqref{eq:noise}. The expectation (w.r.t.
realizations of $\vf$) of the empirical autocorrelation
\eqref{eq:empautocorr} is independent of measurement time $T$:
\begin{equation}\label{eq:indT}
\langle \C(\vec{\vx}_r,\tau)\rangle = \CE(\vec{\vx}_r,\tau),
\end{equation}
where
\begin{equation}\label{eq:expectedcorr}
\begin{aligned}
\CE(\vec{\vx}_r,\tau) &= \sum_{s,s'=1}^N
\int dt' \int dt''\overline{G}(\vec{\vx}_r,\vec{\vx}_r,-t')G(\vec{\vx}_r,\vec{\vx}_{s'},\tau-t'')R_{s,s'}(t''-t')\\
&=\frac{1}{2\pi}\int d\omega
e^{-\ii\omega\tau}\vg(\vec{\vx}_r,\omega)^*\hat{\mR}(\omega)\vg(\vec{\vx}_r,\omega).\\
\end{aligned}
\end{equation}
Furthermore, \eqref{eq:empautocorr} is ergodic, i.e. 
\begin{equation}\label{eq:ergodicity}
\C(\vec{\vx}_r,\tau)\xrightarrow{T\to\infty}\CE(\vec{\vx}_r,\tau).
\end{equation}
\end{proposition}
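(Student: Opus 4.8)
### Proof proposal for Proposition~\ref{prop:statstab}

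\textbf{Part 1: Computing the expectation and its $T$-independence.} The plan is to substitute the time-domain representation \eqref{eq:totalfieldstoch} of $u(\vec{\vx}_r,t)$ directly into the definition \eqref{eq:empautocorr} of the empirical autocorrelation, producing a double sum over $s,s'$ and a quadruple time integral. Taking the expectation $\langle\cdot\rangle$ and moving it inside the (finite) sum and the integrals, the only stochastic factor is $\langle\overline{f_s(t')}f_{s'}(t'')\rangle$, which by \eqref{eq:noise} equals $R_{s,s'}(t''-t')$ — crucially this depends only on the difference $t''-t'$, not on the outer time variable $t$. Therefore the integrand of the $\frac{1}{2T}\int_{-T}^T dt$ average is independent of $t$, the average collapses, and one is left precisely with the first line of \eqref{eq:expectedcorr}. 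The second (frequency-domain) line follows by inserting the Fourier representations of $G$ and $R_{s,s'}$, using \eqref{eq:FT} and the convolution structure; the $t'$ and $t''$ integrals become delta functions in frequency that reduce the expression to a single $\omega$-integral, and collecting the sum over $s,s'$ into the quadratic form $\vg(\vec{\vx}_r,\omega)^*\hat{\mR}(\omega)\vg(\vec{\vx}_r,\omega)$. Here $\vg = \vg_0 + \vp$ in the Born approximation, matching the notation of \S\ref{sec:arrayimaging}. This part is routine bookkeeping and should be relegated to the appendix as the excerpt indicates.

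\textbf{Part 2: Ergodicity.} The substance is showing $\C(\vec{\vx}_r,\tau)\to\CE(\vec{\vx}_r,\tau)$ as $T\to\infty$; following Garnier--Papanicolaou, the natural route is an $L^2$ (mean-square) convergence argument, i.e. showing $\var(\C(\vec{\vx}_r,\tau))\to 0$ as $T\to\infty$, which combined with Part~1 gives convergence in probability (and one can upgrade along subsequences if almost-sure convergence is desired). Concretely, I would write $\var(\C) = \langle|\C|^2\rangle - |\langle\C\rangle|^2$, expand $\langle|\C(\vec{\vx}_r,\tau)|^2\rangle$ as a $\frac{1}{4T^2}\int_{-T}^T\int_{-T}^T dt\,d\tilde t$ double average of a product of four field factors, and apply the Gaussian moment (Isserlis/Wick) formula: since $\vf$ is mean-zero Gaussian, the fourth moment splits into three pairings, one of which reproduces $|\langle\C\rangle|^2$ and cancels, while the other two contribute the variance. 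Each surviving term is a multiple time integral whose stochastic content is a product of two correlation functions $R_{\cdot,\cdot}$, each depending on differences of time variables; after changing variables to put one integration variable as $t - \tilde t$, the dependence on the remaining "center" variable integrates to give a factor that is $O(1/T)$ relative to the $1/(4T^2)$ prefactor times the $O(T)$ volume, so the whole variance is $O(1/T)$ provided the $R_{s,s'}$ and the Green's function kernels are integrable enough (which holds because $\rho$ is compactly supported, the medium is homogeneous away from the scatterer, and $\hat{\mR}$ is an integrable Hermitian matrix function).

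\textbf{Main obstacle.} The delicate point is controlling the four-fold time integral arising in the second moment: one must verify that the kernels $G(\vec{\vx}_r,\vec{\vx}_s,\cdot)$ are such that the relevant convolutions are well-defined and decay, and that interchanging expectation with the $t$-integrals (and with the $\omega$-integrals in the frequency-domain reduction) is justified — this is where integrability/decay hypotheses on $\hat{\mR}(\omega)$ and on the signals' frequency band $\sB$ enter. The bookkeeping of the three Wick pairings and the cancellation of the disconnected term against $|\langle\C\rangle|^2$ is mechanical but must be done carefully so that the correct two "connected" terms remain. Since the statement asserts this proposition is essentially \cite[Proposition 4.1]{Garnier:2009:PSI} with cosmetic changes (complex fields, more general spatial correlation), I expect the proof to parallel theirs closely, with the complex case handled by carrying conjugates through the Wick expansion and the general $\mR$ handled by keeping the matrix quadratic form $\vg^*\hat{\mR}\vg$ intact rather than specializing to a rank-structure; I would defer the full computation to appendix~\ref{sec:appendix} as flagged in the text.
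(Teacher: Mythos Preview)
Your proposal follows essentially the same route as the paper's proof in appendix~\ref{sec:appendix}: stationarity makes the integrand of the $t$-average constant so Part~1 collapses immediately to \eqref{eq:expectedcorr}, and Part~2 is a second-moment computation showing $\var(\C)=O(1/T)$ via a Gaussian moment expansion. Two small points where the paper differs from what you sketch: (i) you invoke the real Isserlis/Wick theorem and expect three pairings, but the paper treats $\vf$ as a \emph{complex} Gaussian process and applies the complex Gaussian moment theorem \cite{Reed:1962:MTCG}, under which the $\langle f f\rangle$-type pairing vanishes and only two terms arise, so after the disconnected term cancels only \emph{one} connected contribution remains --- this is exactly the ``complex fields'' modification flagged in the statement; (ii) rather than your time-domain change of variables $t-\tilde t$, the paper passes to the frequency domain and obtains a $\sinc^2((\omega-\omega')T)$ kernel whose concentration as $T\to\infty$ gives the $O(1/T)$ variance bound. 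Both routes are equivalent and your overall plan is correct.
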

\subsection{Pairwise stochastic illuminations} \label{sec:stoch:pair}

We make $N_p$ illuminations each corresponding to using only two distinct sources $(i(m),j(m)) \in \{1,\ldots,N\}^2$, $m=1,\ldots,N_p$. The correlation matrix for the $m-$th experiment has the form
\begin{equation}
 \hat{\mR}_m(\omega) = \mF_m \mC(\omega) \mF_m^\tr,
 \label{eq:spatialcorrfn}
\end{equation}
where $\mF_m = [ \ve_{i(m)}, \ve_{j(m)} ] \in \real^{N \times 2}$ and
$\mC(\omega)$ is a known $2\times 2$ Hermitian positive semidefinite matrix that
represents the correlation between the two sources and is assumed to be the same for all
experiments. For instance, if we send the same signal with power spectrum $F(\omega)$ from both sources in a pair, this correlation matrix is
\[
 \mC(\omega) = F(\omega) \begin{bmatrix} 1 & 1 \\ 1 & 1 \end{bmatrix}.
\]

By the ergodicity \eqref{eq:ergodicity} of proposition~\ref{prop:statstab},
when we measure the empirical autocorrelation $\C_m$ of $u_m$ at the receiver $\vec{\vx}_r$ for long enough time $T$, the empirical autocorrelation is close to an
intensity measurement, i.e.
\begin{equation}\label{eq:stoc_pair_meas}
\hat{\CE}_m(\vec{\vx}_r,\omega) = \vg(\vec{\vx}_r,\omega)^* \mF_m \mC(\omega) \mF_m^\tr \vg(\vec{\vx}_r,\omega).
\end{equation}
By using appropriate single source illuminations driven by a signal with known
correlation, it is possible to measure
\begin{equation}\label{eq:stoc_sing_meas}
\hat{\CE}_i^0(\vec{\vx}_r,\omega) = 
\vg^*(\vec{\vx}_r,\omega)\ve_{i}\ve_i^\tr\vg(\vec{\vx}_r,\omega).
~\text{ for $i=1,\ldots,N$.}
\end{equation}
From \eqref{eq:stoc_pair_meas} and \eqref{eq:stoc_sing_meas} we obtain the
$m-$th measurement 
\begin{equation}\label{eq:mthmeas}
\begin{aligned}
d_m(\vec{\vx}_r,\omega) &=  \hat{\CE}_m(\vec{\vx}_r,\omega) -
 C_{11}(\omega) \hat{\CE}_{i(m)}^0(\vec{\vx}_r,\omega) -
 C_{22}(\omega) \hat{\CE}_{j(m)}^0(\vec{\vx}_r,\omega)
\\
&=\vg(\vec{\vx}_r,\omega)^*\mF_m \mD(\omega) \mF_m^\tr \vg(\vec{\vx}_r,\omega),
\end{aligned}
\end{equation}
where the matrix $\mD$ is $2\times 2$, Hermitian with zero diagonal, i.e.
precisely of the same form as the matrix $\mD$ we encountered in the intensity
measurements case \eqref{eq:mD}.

Proceeding analogously as in section~\ref{sec:detillum} and recalling that
$\vg=\vg_0+\vp$ we have 
\[
\begin{aligned}
d_m(\vec{\vx}_r,\omega) &=
\big(\vg_0+\vp\big)^*\mF_m \mD(\omega) \mF_m^\tr(\vg_0+\vp\big).
\end{aligned}
\]
Collecting the measurements for $m=1,\ldots,N_p$ and
neglecting the quadratic term in $\vp$ we have the approximate data
\begin{equation}\label{eq:stocdata}
\begin{aligned}
\begin{bmatrix} 
d_1(\vec{\vx}_r,\omega)\\
d_2(\vec{\vx}_r,\omega)\\
\vdots\\
d_{N_p}(\vec{\vx}_r,\omega)
\end{bmatrix} \approx
\vd(\vec{\vx}_r,\omega)=
\mM(\vec{\vx}_r,\omega)
\begin{bmatrix}
\Re\big(\vg_0+2\vp\big)\\
\Im\big(\vg_0+2\vp\big)
\end{bmatrix},
\end{aligned}
\end{equation}
where the matrix $\mM\in\real^{N_p\times 2N}$ is again given by
\eqref{eq:M}. Thus, the data \eqref{eq:stocdata} obtained by measuring
the empirical autocorrelation \eqref{eq:empautocorr} and using
correlated pair illuminations, is essentially the same as the data
obtained using deterministic source pairs \eqref{eq:detdata}. Hence the
analysis of the matrix $\mM$ of \S\ref{sec:invertible} holds and we can
use Kirchhoff migration as we did in \S\ref{sec:migration} for the
intensity measurements case.

\begin{remark}[Uncorrelated background
illumination]\label{rem:backgroundnoise} The proposed illumination
strategy is robust with respect to noise and even allows to send the \emph{same}
Gaussian signal from the $m-$th source pair 
$(i(m),j(m))$ and \emph{independent} Gaussian
signals from all remaining sources on the array. If the independent
signals have the same spectral density $F(\omega)$ as the source pair
signal, the correlation matrix for the $m-$th experiment is
\begin{equation}\label{eq:inhomog_meas}
\hat{\mR}_m(\omega) = 
F(\omega)\left(\mI + \mF_m \begin{bmatrix}0&1\\1&0\end{bmatrix} \mF_m^\tr\right),
\end{equation}
where $\mI$ is $N\times N$ identity matrix. By subtracting from the
autocorrelation for the $m-$th experiment, the autocorrelation for a
reference illumination that sends \emph{independent} Gaussian signals with
correlation matrix $F(\omega) \mI$, it is possible to obtain $m-$th measurement
\eqref{eq:mthmeas} with \[\mD(\omega) = F(\omega)\begin{bmatrix} 0&1\\1&0\end{bmatrix}.\]
\end{remark}

\section{Additive noise}\label{sec:snr}
Here we discuss the effects of additive instrumental noise in autocorrelated
measurements of the total field. The
total field at $\vec{\vx}_r$ resulting from illuminating with the $m-$th pair
and tainted with additive noise is $u_m(\vec{\vx}_r,t)+\xi(t)$. We assume the
noise $\xi$ is a stationary Gaussian process with mean zero and spectral
density 
\begin{equation}\label{eq:noisepsd}
\hat{\Xi}(\omega) =
\exp\left(\frac{-l_c^2(\omega-\omega_0)^2}{4\pi}\right).
\end{equation}
Here $l_c$ represents the correlation time of the noise (i.e. $\Xi(\tau)=\langle
\overline{\xi}(t)\xi(t+\tau)\rangle \approx 0$ for $\tau\gg l_c$) and $\omega_0$ is
the central angular frequency of the noise. If the noise $\xi$ is independent
of the signals used to drive the source pairs, it can be shown using the
techniques of appendix~\ref{sec:appendix} that 
\[
\frac{1}{2T}\int_{-T}^T dt
\big(\overline{u}_m(\vec{\vx}_r,t)+\overline{\xi}(t)\big)\big(u_m(\vec{\vx}_r,t+\tau)+\xi(t+\tau)\big)\xrightarrow{T\to\infty}
\CE_m(\vec{\vx}_r,\tau)+\Xi(\tau),
\]
where $\CE_m$ is given by \eqref{eq:expectedcorr}.

Assuming the same form of instrumental noise in the single source
reference measurements, the $m-$th measurement
$d_m(\vec{\vx}_r,\omega)$ is
\[
d_m(\vec{\vx}_r,\omega) =
\big(\vg_0+\vp\big)^*\mF_m \mD(\omega) \mF_m^\tr \big(\vg_0+\vp\big)+C\hat{\Xi}(\omega).
\]
for some $C\in\real$. Neglecting the terms which are quadratic in $\vp$
and going back to the time domain we have 
\[
\begin{aligned}
d_m(\vec{\vx}_r,\tau) \approx \frac{1}{2\pi}\int d\omega
e^{-\ii\omega\tau}\Big[&\vg_0(\vec{\vx}_r,\omega)^* \mF_m \mD(\omega) \mF_m^\tr \vg_0(\vec{\vx}_r,\omega)\\
+&\vg_0(\vec{\vx}_r,\omega)^* \mF_m \mD(\omega) \mF_m^\tr \vp(\vec{\vx}_r,\omega)\\
+&\vp(\vec{\vx}_r,\omega)^* \mF_m \mD(\omega) \mF_m^\tr \vg_0(\vec{\vx}_r,\omega)\Big]+C\Xi(\tau),
\end{aligned}
\]
with the slight abuse of notation of using $d_m$ for both time and frequency domain quantities. The second
and third terms in the integrand are incident-scattered field
correlations and contain the available information about the scattering
potential $\rho(\vec{\vy})$. 

For simplicity, we now focus on the case where the source pair signals have
correlation matrix 
\[
 \mD(\omega) = F(\omega) \begin{bmatrix} 0 & e^{\ii \omega \phi} \\
 e^{-\ii\omega\phi} & 0 \end{bmatrix}.
\]
Such correlation corresponds to sending a signal from one of the sources in a
pair and a copy of the same signal delayed by $\phi$ from the other source.
For a point scatterer at $\vec{\vy}$, the incident-scattered terms have peaks at
delay times $\tau(\vec{\vy})$ corresponding to differences between travel times
of a reflected path and direct path, i.e. for the $m-$th experiment the
peaks occur at the four possible delays
\[
\tau(\vec{\vy}) =
\begin{cases}
\pm((|\vec{\vx}_{j(m)}-\vec{\vy}|+|\vec{\vy}-\vec{\vx}_r|-|\vec{\vx}_{i(m)}-\vec{\vx}_r|)/c_0
+ \phi ),\\
\pm((|\vec{\vx}_{i(m)}-\vec{\vy}|+|\vec{\vy}-\vec{\vx}_r|-|\vec{\vx}_{j(m)}-\vec{\vx}_r|)/c_0
- \phi ).
\end{cases}
\]

Consider then the minimal delay time $\tau_{\min}(\vec{\vy})$ given by
\begin{equation}\label{eq:mintau}
\tau_{\min}(\vec{\vy}) =
\min_{\vec{\vx}_s,\vec{\vx}_{s'}\in\sA}\left|\frac{|\vec{\vx}_s-\vec{\vy}|+|\vec{\vy}-\vec{\vx}_r|-|\vec{\vx}_{s'}-\vec{\vx}_r|}{c_0}
\pm \phi\right|,
\end{equation}
that is the minimal delay time we expect the incident-scattered
correlations to peak. If we assume the additive noise decorrelates much
faster than the first incident-scattered arrival from $\vec{\vy}$ (i.e. $l_c\ll
\tau_{\min}(\vec{\vy})$), then the information of the scatterer
$\rho(\vec{\vy})$ contained in $d_m(\vec{\vx}_r,\tau)$ is
essentially unchanged (up to ergodic averaging). Hence we can stably
image using the proposed method at $\vec{\vy}$ provided
$\tau_{\min}(\vec{\vy})\gg l_c$.


\section{Numerical experiments}\label{sec:numerics}
Here we include 2D numerical experiments of our proposed imaging routine
for scalings corresponding to acoustics (\S\ref{sec:numacoustic}) and
optics (\S\ref{sec:numoptic}). We demonstrate the stochastic source pair
illumination strategy for the acoustic regime, i.e. we compute the
autocorrelations for time domain data. In the optic regime this is an
expensive calculation, so we use instead power spectra (i.e.
deterministic illuminations).

\subsection{Acoustic regime}\label{sec:numacoustic}
For imaging in an acoustic regime, our choice of physical parameters
corresponds to ultrasound in water. We choose the background wave
velocity to be $c_0=1500$ m/s. The central frequency for all signals
(sources and additive noise) is $3$ MHz, which gives a central
wavelength of $\lambda_0 = 0.5$ mm. We center a source array $\sA$ at
the origin consisting of $41$ sources at coordinates
$\vec{\vx}_s=(0,-10\lambda_0+(s-1)\lambda_0/2)$ for $s=1,\ldots,41$. A
single receiver is located at the coordinate
$\vec{\vx}_r=(-20\lambda_0,-20\lambda_0)$ (see
figure~\ref{fig:arraysetup}).

We generate a stationary Gaussian time signal $f(t)$ with mean zero and
correlation function 
\[
F(\tau) = \exp\left(-\pi\frac{\tau^2}{t_c^2}\right),
\]
using the Wiener-Khinchin theorem. The correlation time $t_c\approx
1.25$ $\mu$s which gives the signal an effective frequency band $[1,5]$
MHz. We generate time signals of length $2T$ for $T\approx 260$ $\mu$s
with $8001$ uniformly spaced samples. This sampling is enough to resolve
the frequencies in the angular frequency band $\sB$, while $T$ is
sufficient to observe ergodic averaging (see \S\ref{sec:stochillum}). By
placing the \emph{same} realization of this signal $\hat{f}(\omega)$ at 
the locations $\vec{\vx}_{i(m)}$ and $\vec{\vx}_{j(m)}$ we generate
the pair illumination $\vf_m(\omega) =
\hat{f}(\omega)(\ve_{i(m)}+\ve_{j(m)}).$ Similarly, by placing an
independent realization of $\hat{f}(\omega)$ at location $\vec{\vx}_i$
we generate the single source reference illumination $\vf_i^0(\omega) =
\hat{f}(\omega)\ve_i$.

For all experiments, synthetic data is generated in the frequency domain
using the Born approximation. We assume 3D wave propagation for
simplicity so that $G_0$ is given by \eqref{eq:green} for $d=3$. The
$m-$th measurement is obtained through the formula
\[
d_m(\vec{\vx}_r,\omega) =
\hat{\CE}_m(\vec{\vx}_r,\omega)-\hat{\CE}_{i(m)}^0(\vec{\vx}_r,\omega)-\hat{\CE}_{j(m)}^0(\vec{\vx}_r,\omega),
\]
where 
\[
\begin{aligned}
\hat{\CE}_m(\vec{\vx}_r,\omega) &=
\Big|\big(\vg_0(\vec{\vx}_r,\omega)+\vp(\vec{\vx}_r,\omega)\big)^\tr\vf_m(\omega)\Big|^2,\\
\hat{\CE}_i^0(\vec{\vx}_r,\omega) &=
\Big|\big(\vg_0(\vec{\vx}_r,\omega)+\vp(\vec{\vx}_r,\omega)\big)^\tr\vf_i^0(\omega)\Big|^2,
\end{aligned}
\]
with $\vg_0$ and $\vp$ defined by \eqref{eq:homoggreen} and
\eqref{eq:arv} respectively.

For these simulations we use the full set of pair illuminations, which
for $N=41$ source locations, generates a measurement matrix
$\mM(\vec{\vx}_r,\omega)\in\real^{820\times 82}$. We use the
Moore Penrose pseudoinverse $\mM^\dagger$ to recover $\vp+\zeta\vg_0$
for each $\omega\in\sB$. When the number of sources $N$ and thus the
dimension of $\mM$ is large (recall $\mM\in\real^{N(N-1)/2\times N}$), the
pseudoinverse could become computationally expensive. However, the
system is sparse as it contains only $4$ non-zero elements per row, so
linear least square solvers that exploit sparsity (e.g.
CGLS \cite{Hansen:1998:RDD}) may be more efficient than our approach.
Furthermore, as discussed in \S\ref{sec:invertible} we can reduce the
size of $\mM$ to $2N\times 2N$ while keeping the nullspace of $\mM$
one-dimensional by using an appropriate subset of source pairs. 

We form an image at $\vec{\vy}\in
\sW=\{(100\lambda_0+i\lambda_0/2.5,j\lambda_0/2.5),\text{ for }
i,j=-25,\ldots,25\}$ using the Kirchhoff migration functional
(\S\ref{sec:km}), summed over the bandwidth band $\sB$,
\[
\KM[\vp+\zeta\vg_0](\vec{\vy}) = \int_{\sB} d\omega
\KM[\vp+\zeta\vg_0,\omega](\vec{\vy}).
\]
For our first experiment, we place a single point reflector at the
location $\vec{\vy}=(100\lambda_0,0)$, with refractive index
perturbation $\rho(\vec{\vy}) = 1\times 10^{-8}$. The migrated image
(figure~\ref{fig:pointref}a) indeed exhibits the cross-range
(Rayleigh) resolution estimate $\lambda_0L/a \approx 5\lambda_0 $ and
range resolution estimate $c_0/|\sB| \approx 1\lambda_0$. Note that
there is a trade-off in the choice of the reflectivity: $\rho$ has to be
sufficiently small so that the quadratic terms in $\vp$
can be neglected in \eqref{eq:detdata}. However the smaller $\rho$ is,
the longer the acquisition time $T$ has to be in order to better observe
the reflected-incident correlations in the data.

In our second experiment (figure~\ref{fig:pointref}b), we consider two
oblique reflectors located at $\vec{\vy}_1 = (99\lambda_0,-2\lambda_0)$
and $\vec{\vy}_2=(103\lambda_0,4\lambda_0)$ each with $\rho(\vec{\vy}_i)
= 1\times 10^{-8}$. We include a reconstruction of an extended scatterer
(line segment) in figure~\ref{fig:extended_scatterer}. Here the line
segment is generated as a set of point reflectors each with
$\rho(\vec{\vy}_i)=1\times 10^{-9}$ uniformly spaced by $\lambda_0/8$.

\begin{figure}[!hbtp]
\centering
\begin{tabular}{p{0.5\textwidth}p{0.5\textwidth}}
\centering $\KM[\vp](\vec{\vy})$ &
\centering $\KM[\vp+\zeta\vg_0](\vec{\vy})$
\end{tabular}
\includegraphics[width=\textwidth]{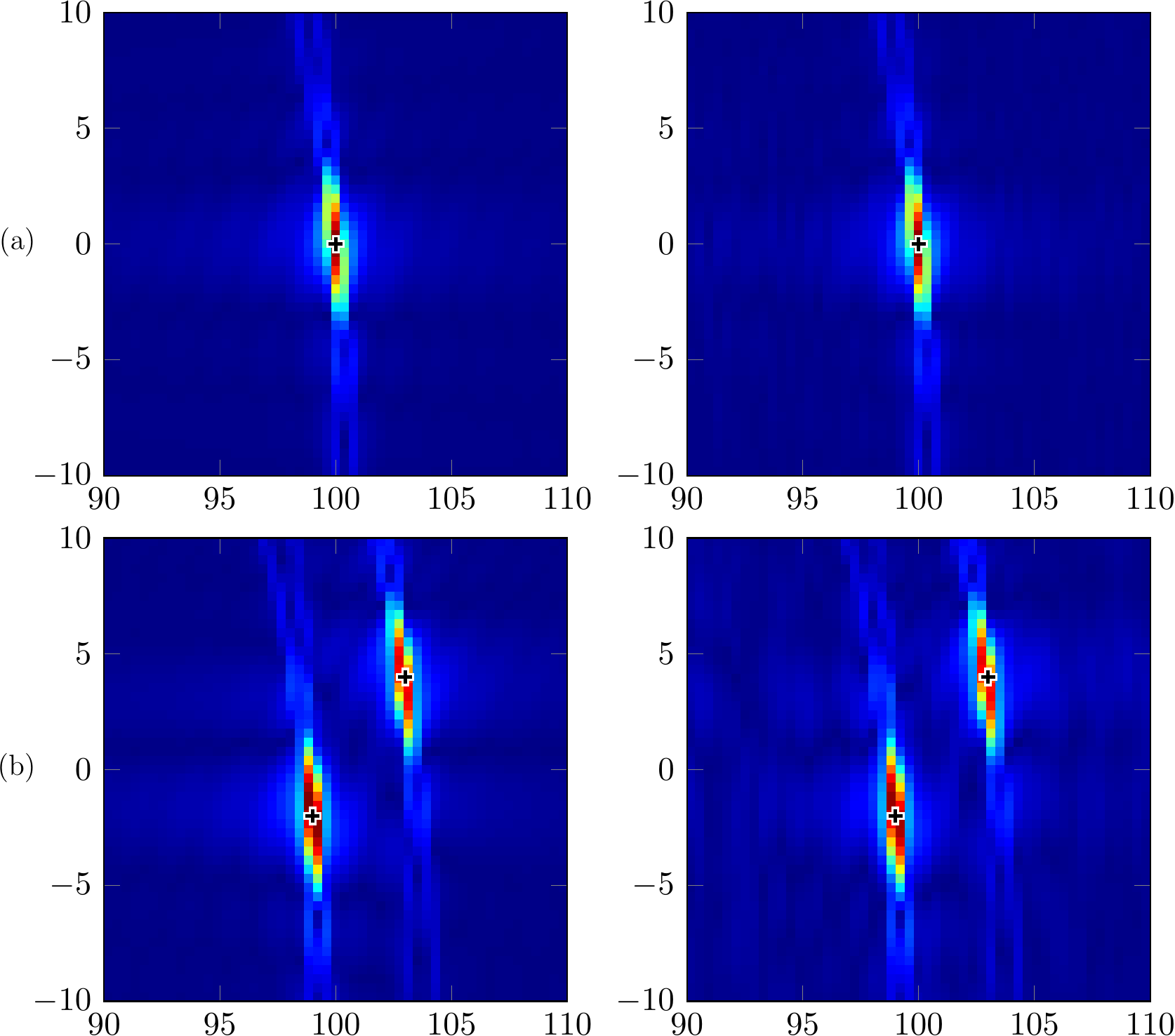}
\caption{Kirchhoff images of (a) one point and (b) two point reflectors,
whose true positions are indicated with crosses. The left column uses
the full waveform data $\vp$, while the right column use the recovered
data $\vp+\zeta\vg_0$. The horizontal
and vertical axes display the range and cross-range respectively, with
scales in central wavelengths $\lambda_0$.}\label{fig:pointref}
\end{figure}

\begin{figure}[!hbtp]
\centering
\begin{tabular}{p{0.5\textwidth}p{0.5\textwidth}}
\centering $\KM[\vp](\vec{\vy})$ &
\centering $\KM[\vp+\zeta\vg_0](\vec{\vy})$
\end{tabular}
\includegraphics[width=\textwidth]{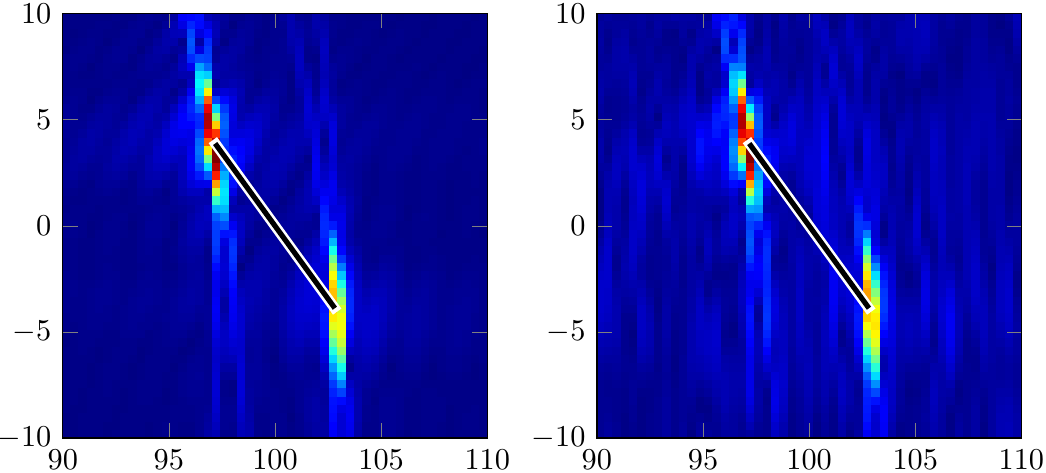}
\caption{Kirchhoff images of an extended reflector. The left column uses
the full waveform data $\vp$, while the right column use the recovered
data $\vp+\zeta\vg_0$. The horizontal
and vertical axes display the range and cross-range respectively, with
scales in central wavelengths $\lambda_0$.}\label{fig:extended_scatterer}
\end{figure}

We now demonstrate the robustness of the proposed method with respect to
additive noise (see section~\ref{sec:snr}). Here we have taken a
realization of the data for a single point reflector (c.f.
figure~\ref{fig:pointref}a) and perturbed each measurement with
additive noise as follows. The $m-$th signal 
$\hat{u}_m(\vec{\vx}_r,\omega)$ has total power $p_m = \int
|\hat{u}_m(\vec{\vx}_r,\omega)|^2 d\omega$. We construct a Gaussian
signal $\xi_m(t)$ with mean zero, spectral density \eqref{eq:noisepsd},
$l_c \approx 1.25$ $\mu$s and total power $1$. This allows to obtain the
perturbed total field $\hat{u}_m(\vec{\vx}_r,\omega)+\sqrt{\nu
p_m}\hat{\xi}_m(\omega)$ for some $\nu >0$. The $m-$th measurement with
additive noise is thus $d_m(\vec{\vx}_r,\omega) =
|\hat{u}_m(\vec{\vx}_r,\omega)|^2 + \nu p_m|\hat{\xi}(\omega)|^2$. Thus
the ratio of the signal power to the noise power is $1/\nu$. The
signal-to-noise ratio (SNR) is then \[ \text{SNR}_m = -10\log_{10}(\nu) \text{dB}.
\]
Figure~\ref{fig:centered_noise} shows the reconstruction from data with
SNR$_m=0$ dB for each $m$, meaning that the signal and the noise have
the same power.

\begin{figure}[!hbtp]
\centering
\begin{tabular}{p{0.5\textwidth}p{0.5\textwidth}}
\centering $\KM[\vp](\vec{\vy})$ &
\centering $\KM[\vp+\zeta\vg_0](\vec{\vy})$
\end{tabular}
\includegraphics[width=\textwidth]{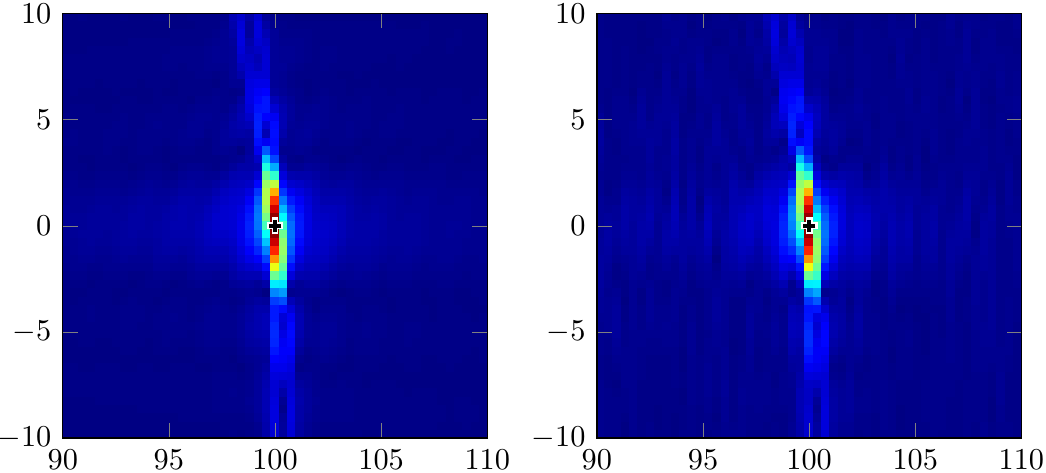}
\caption{Additive noise: (left) array response
vector migration $\KM[\vp](\vec{\vy})$, (right) recovered array response
vector migration $\KM[\vp+\zeta\vg_0](\vec{\vy})$ for SNR$_m=0$dB. The
horizontal and vertical axes display the range and cross-range
respectively measured in central wavelengths
$\lambda_0$.}\label{fig:centered_noise}
\end{figure}

Lastly we perform an experiment that sends as the $m-$th illumination
the usual correlated pair illumination $\vf_m$, and uncorrelated noise
from the remaining sources on the array $\sA$ (see
remark~\ref{rem:backgroundnoise}). To generate this illumination we
place the \emph{same} realization of the signal $\hat{f}(\omega)$ at the
locations $\vx_{i(m)}$ and $\vx_{j(m)}$, and \emph{independent}
realizations of $\hat{f}(\omega)$ at the remaining source locations.
Similarly, a reference illumination is generated by placing
\emph{independent} realizations of $\hat{f}(\omega)$ at \emph{all}
locations on the array $\sA$. By measuring the autocorrelation of the
resulting fields we obtain data that is essentially the same form as
$d_m(\vec{\vx}_r,\omega)$.  Figure~\ref{fig:backgroundnoise} shows this
experiment with the single point reflector located at
$\vec{\vy}=(100\lambda_0,0)$ and reflectivity $\rho(\vec{\vy}) =
1\times 10^{-8}$. 

\begin{figure}[!hbtp]
\centering
\begin{tabular}{p{0.5\textwidth}p{0.5\textwidth}}
\centering $\KM[\vp](\vec{\vy})$ &
\centering $\KM[\vp+\zeta\vg_0](\vec{\vy})$
\end{tabular}
\includegraphics[width=\textwidth]{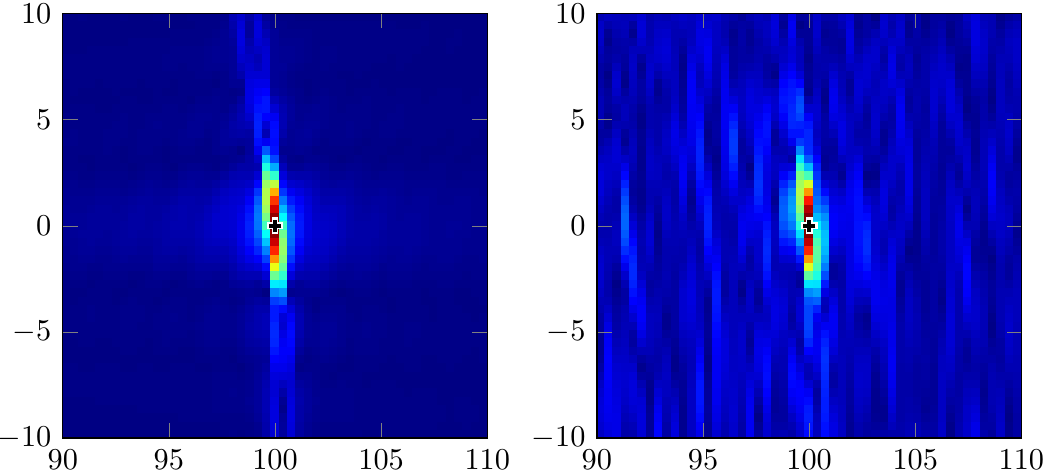}
\caption{Uncorrelated background illumination: (left) array response vector
migration $\KM[\vp](\vec{\vy})$,  (b) recovered array response vector
migration $\KM[\vp+\zeta\vg_0](\vec{\vy})$ for SNR$_m=0$dB. The
horizontal and vertical axes display the range and cross-range
respectively measured in central wavelengths
$\lambda_0$.}\label{fig:backgroundnoise} 
\end{figure}

\subsection{Optic regime}\label{sec:numoptic} For imaging in an optic
regime, we use the background wave velocity $c_0 = 3\times 10^8$ m/s and
central frequency $\approx 589$ THz which gives a central wavelength
$\lambda_0 \approx 509$ nm. Our source array $\sA$ is again centered at the
origin, but now consists of $1001$ sources located at coordinates
$\vec{\vx}_s = (0,-500\lambda_0+(s-1)\lambda_0)$ for $s=1,\ldots,1001$,
and we set $\vec{\vx}_r = (-1000\lambda_0,-1000\lambda_0)$. 

We generate intensity data $\vd(\vec{\vx}_r,\omega)$ as 
\[
d_m(\vx_r,\omega) = 
\Big|\big(\vg_0+\vp\big)^T(\ve_{i(m)}+\ve_{j(m)})\Big|^2-\Big|\big(\vg_0+\vp\big)^T\ve_{i(m)}\Big|^2-\Big|\big(\vg_0+\vp\big)^T\ve_{j(m)}\Big|^2,
\]
for $100$ (angular) frequencies $\omega$ uniformly spaced in the
frequency band $[429,750]$ THz. This
corresponds to performing the source pair experiments (source pair
illuminations and single source reference illuminations) for $100$
different monochromatic visible light sources with wavelengths
$\lambda\in[400,700]$ nm, equally spaced in frequency. Since there are a large number
of sources in this setup ($N=1001$), we implement the strategy discussed
in \S\ref{sec:invertible} to reduce the number of source pair
experiments from $N_p=N(N-1)/2$ to $N_p = 2N$.

As before, we use the pseudoinverse $\mM^\dagger$ to recover
$\vp+\zeta\vg_0$ for each frequency $\omega\in\sB$, and then use the
Kirchhoff migration functional (\S\ref{sec:km}) to form an image. Here
we use the image window $\sW =
\{(5000\lambda_0+i\lambda_0/2.5,j\lambda_0/2.5), \text{for
$i,j=-25,\ldots,25$}\}$. In figure~\ref{fig:optic}(b)
we demonstrate the migrated image for two point reflectors placed at
$\vec{\vy}_1=(4098\lambda_0,3\lambda_0)$ and
$\vec{\vy}_2=(5004\lambda_0,-5\lambda_0)$ each with reflectivity
$\rho(\vec{\vy}_i)=1\times 10^{-17}$. Although we are significantly
undersampling the data in frequency and the source spacing is larger
than $\lambda_0/2$, the spot sizes still exhibit the Kirchhoff migration
resolution estimates (\S\ref{sec:km}) of $\lambda_0L/a\approx
5\lambda_0$ in cross-range and $c_0/|\sB|\approx 2\lambda_0$ in range. 

\begin{figure}[!hbtp]
\centering
\begin{tabular}{p{0.5\textwidth}p{0.5\textwidth}}
\centering $\KM[\vp](\vec{\vy})$ &
\centering $\KM[\vp+\zeta\vg_0](\vec{\vy})$
\end{tabular}
\includegraphics[width=\textwidth]{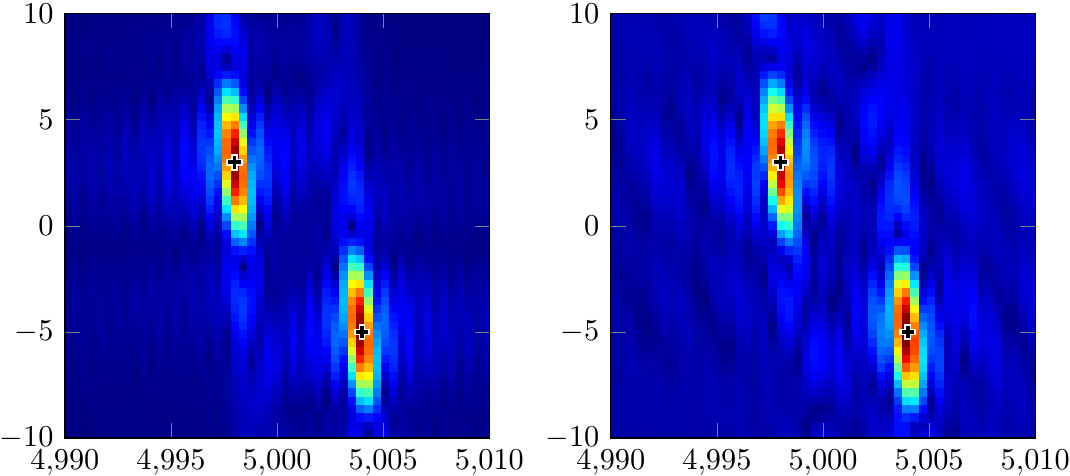}
\caption{Optic regime:(left) array response vector migration
$\KM[\vp](\vec{\vy})$,  (b) recovered array response vector migration
$\KM[\vp+\zeta\vg_0](\vec{\vy})$. The horizontal and vertical axes
display the range and cross-range respectively measured in central
wavelengths $\lambda_0$.}\label{fig:optic} 
\end{figure}

\section{Discussion}\label{sec:discussion}
By sending correlated signals from different pairs of locations we have
shown that from intensity data we can recover full waveform data by
solving a linear system. This linear system has a known one-dimensional
nullspace provided the sources and receiver satisfy the distance
conditions given by assumption~\ref{assump:rankcond}, which allows for
the recovery of $\vp+\zeta\vg_0$. We show this quantity is enough to
use standard migration techniques (e.g. Kirchhoff migration $\KM$)
provided the sources and receiver satisfy the additional geometric
conditions of assumption~\ref{assump:geometricimg}. Thus we obtain full
waveform resolution estimates for an image formed from intensity-only
data.

Our method relies only on knowledge of paired source locations and the
correlation of the signals being sent. This allows us to relax
illumination control by using paired stochastic signals. By measuring
autocorrelations of the resulting fields, we obtain essentially the same
intensity data as with using deterministic source pairs. These
stochastic illuminations can be created e.g. by using a configurable
mask that is parallel to the wave fronts of an incoherent plane wave. 

The linear system we solve has size $2N\times 2N$ and is very sparse (up
to 4 non-zero entries per row). In our simulations we used
$\mM^\dagger$, however sparse solvers such as CGLS (see e.g.
\cite{Hansen:1998:RDD}) could be used. To form the system we need at
least $3N$ different illuminations, $2N$ pair illuminations plus $N$
reference illuminations. However, in our illumination strategy, the
phase of the source signals does not need to be known. We replace the
direct phase control by the natural phase modulation that comes from the
different positions of the signals.

We use the geometric imaging conditions
(assumption~\ref{assump:geometricimg}) to show the nullspace of $\mM$
does not affect imaging via $\KM$. This assumption imposes some
restrictions on the juxtaposition of the sources and receiver and in
turn on the forms of illuminations we can consider. For example, using a
stationary phase argument, it can be shown the autocorrelation of the
total field is negligible if spatially continuous array illuminations
(rather than paired point sources) are used. In future work, we would
like to address this more thoroughly to determine if more general
illuminations can be used. It may also be interesting to see if the
source pair strategy we propose will work for other imaging setups. 

\section*{Acknowledgements}
The authors would like to thank Alexander Mamonov, Andy Thaler and Greg
Rice for insightful discussions related to this project. PB is
especially grateful to Graeme W. Milton for his generous support. The
work of P. Bardsley was supported by the National Science Foundation
grants DMS-1411577 and DMS-1211359.  The work of F. Guevara Vasquez was
supported by the National Science Foundation grant DMS-1411577.

\appendix
\section{Proof of proposition~\ref{prop:statstab}}\label{sec:appendix}
In this appendix we prove proposition~\ref{prop:statstab} which details the
statistical stability of the measured autocorrelation
\eqref{eq:empautocorr} with respect to realizations of the illumination
$\vf$. The theorem and proof are patterned after the result by
Garnier~and~Papanicolaou~\cite[Proposition 4.1]{Garnier:2009:PSI}, only
we make small modifications to allow for complex fields and the form
\eqref{eq:noise} of the correlation function $\mR(\tau)$. 

\begin{proof}
Since we are assuming $\vf$ is a stationary process in $t$, the resulting total
field $u$ is also a stationary random process in $t$. So we have
\[
\langle \overline{u}(\vec{\vx}_r,t)u(\vec{\vx}_r,t+\tau)\rangle = \langle
\overline{u}(\vec{\vx}_r,0)u(\vec{\vx}_r,\tau)\rangle,\\
\]
which allows us to compute
\[
\begin{aligned}
\langle \C(\vec{\vx}_r,\tau)\rangle &=
\frac{1}{2T}\int_{-T}^Tdt\langle
\overline{u}(\vec{\vx}_r,t)u(\vec{\vx}_r,t+\tau)\rangle \\
&= \frac{1}{2T}\int_{-T}^T dt\langle
\overline{u}(\vec{\vx}_r,0)u(\vec{\vx}_r,\tau)\rangle=
\langle
\overline{u}(\vec{\vx}_r,0)u(\vec{\vx}_r,\tau)\rangle.
\end{aligned}
\]
So \eqref{eq:empautocorr} is independent of $T$.  By expressing the
quantity $\langle\overline{u}(\vec{\vx}_r,0)u(\vec{\vx}_r,\tau)\rangle$
through the Green's function $G$ we verify \eqref{eq:indT}:
\[
\begin{aligned}
\langle \C(\vec{\vx}_r,\tau)\rangle &=
\sum_{p,p'=1}^N\int dt'\int dt''\overline{G}(\vec{\vx}_r,\vec{\vx}_p,-t')G(\vec{\vx}_r,\vec{\vx}_{p'},\tau-t'')\langle
\overline{f}_{p}(t')f_{p'}(t'')\rangle\\
&=\sum_{p,p'=1}^N\int dt'\int dt''\overline{G}(\vec{\vx}_r,\vec{\vx}_p,-t')G(\vec{\vx}_r,\vec{\vx}_{p'},\tau-t'')R_{p,p'}(t''-t')\\
&=\frac{1}{2\pi}\int d\omega
e^{-\ii\omega\tau}\vg(\vec{\vx}_r,\omega)^*\hat{\mR}(\omega)\vg(\vec{\vx}_r,\omega).
\end{aligned}
\]

To show the ergodicity \eqref{eq:ergodicity}, we need to compute the
variance of $\C$. We first compute the covariance as
\begin{equation}\label{eq:ctcov}\begin{aligned}
\cov&\big(\C(\vec{\vx}_r,\tau),\C(\vec{\vx}_r,\tau+\Delta\tau)\big)
=
\sum_{p,p',q,q'=1}^N\frac{1}{(4T)^2}\int_{-T}^T\int_{-T}^T
dt dt'\int ds ds' du du'\\
&
\times
G(\vec{\vx}_r,\vec{\vx}_p,s)\overline{G}(\vec{\vx}_r,\vec{\vx}_{p'},u-\tau)\overline{G}(\vec{\vx}_r,\vec{\vx}_q,s')G(\vec{\vx}_r,\vec{\vx}_{q'},u'-\tau-\Delta\tau)\\
&\times\Big(\langle
f_p(t-s)\overline{f}_{p'}(t-u)\overline{f}_q(t'-s')f_{q'}(t'-u')\rangle\\
&-\langle
f_p(t-s)\overline{f}_{p'}(t-u)\rangle\langle\overline{f}_q(t'-s')f_{q'}(t'-u')\rangle\Big).
\end{aligned}
\end{equation}
The product of the second order moments is
\[
\langle
f_p(t-s)\overline{f}_{p'}(t-u)\rangle\langle\overline{f}_q(t'-s')f_{q'}(t'-u')\rangle
= R_{p',p}(u-s)R_{q,q'}(s'-u').
\]
Since $\vf(t)$ is Gaussian (in time), the fourth order moment is given by the
complex Gaussian moment theorem
(see e.g. \cite{Reed:1962:MTCG}) as
\[
\begin{aligned}
\langle
f_p(t-s)\overline{f}_{p'}(t-u)\overline{f}_q(t'-s')f_{q'}(t'-u')\rangle
&=
R_{p',p}(u-s)R_{q,q'}(s'-u')\\
&+R_{q,p}(t-t'-s+s')R_{p',q'}(t'-t-u'+u).\\
\end{aligned}
\]

We now integrate over the $t,t'$ variables to obtain
\begin{align*}
&\frac{1}{4T^2}\int_{-T}^Tdt \int_{-T}^T dt' \Big(\langle
f_p(t-s)\overline{f}_{p'}(t-u)\overline{f}_q(t'-s')f_{q'}(t'-u')\rangle\\
&\qquad\qquad\qquad-\langle
f_p(t-s)\overline{f}_{p'}(t-u)\rangle\langle\overline{f}_q(t'-s')f_{q'}(t'-u')\rangle\Big)\\
&=\frac{1}{4T^2}\int_{-T}^Tdt \int_{-T}^T
dt'R_{q,p}(t-t'-s+s')R_{p',q'}(t'-t-u'+u)\\
&= \int d\omega \int d\omega' \sinc^2\big((\omega-\omega')T\big)
e^{\ii\omega'(s-s')}e^{-\ii\omega(u-u')}\hat{R}_{q,p}(\omega)\hat{R}_{p',q'}(\omega').\\
\end{align*}

Plugging this into \eqref{eq:ctcov} we obtain
\[
\begin{aligned}
&\cov\big(\C(\vec{\vx}_r,\tau),\C(\vec{\vx}_r,\tau+\Delta\tau)\big)= 
\sum_{p,p',q,q'=1}^N\int
d\omega \int d\omega'
\sinc^2\big((\omega-\omega')T\big)\\
&\qquad\times
\hat{G}(\vec{\vx}_r,\vec{\vx}_p,\omega')\overline{\hat{G}}(\vec{\vx}_r,\vec{\vx}_{p'},\omega)\overline{\hat{G}}(\vec{\vx}_r,\vec{\vx}_q,\omega)\hat{G}(\vec{\vx}_r,\vec{\vx}_{q'},\omega)\hat{R}_{q,p}(\omega)\hat{R}_{p',q'}(\omega')e^{\ii\omega\Delta\tau}\\
&= \int d\omega \int d\omega'
\sinc^2\big((\omega-\omega')T\big)\big(\vg(\vec{\vx}_r,\omega)^*\hat{\mR}(\omega)\vg(\vec{\vx}_r,\omega')\big)\\
&\qquad\qquad\qquad\qquad\qquad\qquad\times\big(\vg(\vec{\vx}_r,\omega)^*\hat{\mR}(\omega')\vg(\vec{\vx}_r,\omega')\big)e^{\ii\omega\Delta\tau},
\end{aligned}
\] 
where $\vg=\vg_0+\vp$ is given by \eqref{eq:homoggreen} and
\eqref{eq:arv}, and $\big(\hat{\mR}(\omega)\big)_{i,j} =
\hat{R}_{i,j}(\omega)$ is a $\complex^{N\times N}$
Hermitian matrix for each $\omega$. Then taking $T\to\infty$  we compute the variance as
\[
T\var\big(\C_m(\vec{\vx}_r,\tau)\big)\xrightarrow{T\to\infty}\int
d\omega
\Big|\vg(\vec{\vx}_r,\omega)^*\hat{\mR}(\omega)\vg(\vec{\vx}_r,\omega)\Big|^2,
\]
and so the variance is $\bigO(1/T)$ as $T\to\infty$. This establishes
\eqref{eq:ergodicity}.
\end{proof}

\bibliographystyle{siam}
\bibliography{bibautocorr}

\begin{thebibliography}{10}

\bibitem{Bleistein:2001:MMI}
{\sc N.~Bleistein, J.~K. Cohen, and J.~W. Stockwell, Jr.}, {\em Mathematics of
  multidimensional seismic imaging, migration, and inversion}, vol.~13 of
  Interdisciplinary Applied Mathematics, Springer-Verlag, New York, 2001.
\newblock Geophysics and Planetary Sciences.

\bibitem{Bleistein:1986:AEI}
{\sc N.~Bleistein and R.~A. Handelsman}, {\em Asymptotic expansions of
  integrals}, Dover Publications, Inc., New York, second~ed., 1986.

\bibitem{Borcea:2005:IAI}
{\sc L.~Borcea, G.~Papanicolaou, and C.~Tsogka}, {\em Interferometric array
  imaging in clutter}, Inverse Problems, 21 (2005), pp.~1419--1460.

\bibitem{Chai:2011:AII}
{\sc A.~Chai, M.~Moscoso, and G.~Papanicolaou}, {\em Array imaging using
  intensity-only measurements}, Inverse Problems, 27 (2011), pp.~015005, 16.

\bibitem{Chai:2014:ISL}
\leavevmode\vrule height 2pt depth -1.6pt width 23pt, {\em Imaging strong
  localized scatterers with sparsity promoting optimization}, SIAM J. Imaging
  Sci., 7 (2014), pp.~1358--1387.

\bibitem{Cheney:2001:LSM}
{\sc M.~Cheney}, {\em The linear sampling method and the {MUSIC} algorithm},
  Inverse Problems, 17 (2001), pp.~591--595.
\newblock Special issue to celebrate Pierre Sabatier's 65th birthday
  (Montpellier, 2000).

\bibitem{Cheney:2009:FRI}
{\sc M.~Cheney and B.~Borden}, {\em Fundamentals of radar imaging}, vol.~79 of
  CBMS-NSF Regional Conference Series in Applied Mathematics, Society for
  Industrial and Applied Mathematics (SIAM), Philadelphia, PA, 2009.

\bibitem{Crocco:2007:FNL}
{\sc L.~Crocco, M.~D'Urso, and T.~Isernia}, {\em Faithful non-linear imaging
  from only-amplitude measurements of incident and total fields}, Opt. Express,
  15 (2007), pp.~3804--3815.

\bibitem{Fienup:1987:RCV}
{\sc J.~R. Fienup}, {\em Reconstruction of a complex-valued object from the
  modulus of its {F}ourier transform using a support constraint}, J. Opt. Soc.
  Am. A, 4 (1987), pp.~118--123.

\bibitem{Garnier:2009:PSI}
{\sc J.~Garnier and G.~Papanicolaou}, {\em Passive sensor imaging using cross
  correlations of noisy signals in a scattering medium}, SIAM J. Imaging Sci.,
  2 (2009), pp.~396--437.

\bibitem{Garnier:2010:RAI}
\leavevmode\vrule height 2pt depth -1.6pt width 23pt, {\em Resolution analysis
  for imaging with noise}, Inverse Problems, 26 (2010), pp.~074001, 22.

\bibitem{Garnier:2012:CBV}
\leavevmode\vrule height 2pt depth -1.6pt width 23pt, {\em Correlation-based
  virtual source imaging in strongly scattering random media}, Inverse
  Problems, 28 (2012), pp.~075002, 38.

\bibitem{Garnier:2014:RSV}
\leavevmode\vrule height 2pt depth -1.6pt width 23pt, {\em Role of scattering
  in virtual source array imaging}, SIAM J. Imaging Sci., 7 (2014),
  pp.~1210--1236.

\bibitem{Garnier:2015:SNR}
{\sc J.~Garnier, G.~Papanicolaou, A.~Semin, and C.~Tsogka}, {\em Signal to
  {N}oise {R}atio {A}nalysis in {V}irtual {S}ource {A}rray {I}maging}, SIAM J.
  Imaging Sci., 8 (2015), pp.~248--279.

\bibitem{Gbur:2004:ICS}
{\sc G.~Gbur and E.~Wolf}, {\em The information content of the scattered
  intensity in diffraction tomography}, Inform. Sci., 162 (2004), pp.~3--20.

\bibitem{Gerchberg:1972:GSA}
{\sc R.~Gerchberg and W.~Saxton}, {\em A practical algorithm for the
  determination of phase from image and diffraction plane pictures}, Optik, 35
  (1972).

\bibitem{Hansen:1998:RDD}
{\sc P.~C. Hansen}, {\em Rank-deficient and discrete ill-posed problems}, SIAM
  Monographs on Mathematical Modeling and Computation, Society for Industrial
  and Applied Mathematics (SIAM), Philadelphia, PA, 1998.
\newblock Numerical aspects of linear inversion.

\bibitem{Ishimaru:1997:WPS}
{\sc A.~Ishimaru}, {\em Wave propagation and scattering in random media},
  IEEE/OUP Series on Electromagnetic Wave Theory, IEEE Press, New York, 1997.
\newblock Reprint of the 1978 original, With a foreword by Gary S. Brown, An
  IEEE/OUP Classic Reissue.

\bibitem{Maleki:1993:PRI}
{\sc M.~H. Maleki and A.~J. Devaney}, {\em Phase-retrieval and intensity-only
  reconstruction algorithms for optical diffraction tomography}, J. Opt. Soc.
  Am. A, 10 (1993), pp.~1086--1092.

\bibitem{Maleki:1992:TRO}
{\sc M.~H. Maleki, A.~J. Devaney, and A.~Schatzberg}, {\em Tomographic
  reconstruction from optical scattered intensities}, J. Opt. Soc. Am. A, 9
  (1992), pp.~1356--1363.

\bibitem{Novikov:2014:ISI}
{\sc A.~Novikov, M.~Moscoso, and G.~Papanicolaou}, {\em Illumination strategies
  for intensity-only imaging}.
\newblock Preprint, 2014.

\bibitem{Reed:1962:MTCG}
{\sc I.~Reed}, {\em On a moment theorem for complex {G}aussian processes},
  Information Theory, IRE Transactions on, 8 (1962), pp.~194--195.

\bibitem{Schmitt:1999:OCT}
{\sc J.~Schmitt}, {\em Optical coherence tomography ({OCT}): a review},
  Selected Topics in Quantum Electronics, IEEE Journal of, 5 (1999),
  pp.~1205--1215.

\bibitem{Schmitt:1997:OCM}
{\sc J.~Schmitt, S.~Lee, and K.~Yung}, {\em An optical coherence microscope
  with enhanced resolving power in thick tissue}, Optics Communications, 142
  (1997), pp.~203 -- 207.

\bibitem{Schuster:1996:RLC}
{\sc G.~T. Schuster}, {\em Resolution limits for crosswell migration and
  traveltime tomography}, Geophysical Journal International, 127 (1996),
  pp.~427--440.

\bibitem{Schuster:2009:SI}
{\sc G.~T. Schuster}, {\em Seismic Interferometry}, Cambridge University Press,
  2009.

\bibitem{Schuster:2004:ISI}
{\sc G.~T. Schuster, J.~Yu, J.~Sheng, and J.~Rickett}, {\em
  Interferometric/daylight seismic imaging}, Geophysical Journal International,
  157 (2004), pp.~838--852.

\bibitem{Tarchi:2010:SARNR}
{\sc D.~Tarchi, K.~Lukin, J.~Fortuny-Guasch, A.~Mogyla, P.~Vyplavin, and
  A.~Sieber}, {\em {SAR} imaging with noise radar}, Aerospace and Electronic
  Systems, IEEE Transactions on, 46 (2010), pp.~1214--1225.

\bibitem{Teague:1983:DPR}
{\sc M.~R. Teague}, {\em Deterministic phase retrieval: a {G}reen's function
  solution}, J. Opt. Soc. Am., 73 (1983), pp.~1434--1441.

\bibitem{Vela:2012:NRT}
{\sc R.~Vela, R.~Narayanan, K.~Gallagher, and M.~Rangaswamy}, {\em Noise radar
  tomography}, in Radar Conference (RADAR), 2012 IEEE, May 2012,
  pp.~0720--0724.

\end{thebibliography}

\end{document}